\newcommand\scD{{\mathscr D}}
\newcommand\scM{{\mathscr M}}
\newcommand\scN{{\mathscr N}}
\newcommand\scT{{\mathscr T}}
\newcommand\mvector{\boldsymbol}
\newcommand\vv{\mvector{v}}
\newcommand\vx{\mvector{x}}
\newcommand\vA{\mvector{A}}
\newcommand\vX{\mvector{X}}
\newcommand\vvarphi{\mvector{\varphi}}
\newcommand\field{\mathbb}
\newcommand\R{\field{R}}
\newcommand\C{\field{C}}
\newcommand\Z{\field{Z}}
\newcommand\N{\field{N}}
\newcommand\Q{\field{Q}}
\newcommand\rmd{\mathrm{d}}
\newcommand\pder[2]{\dfrac{\partial #1 }{\partial #2}}
\theoremstyle{plain}
\newtheorem{theorem}{Theorem}
\newtheorem{lemma}[theorem]{Lemma}
\newtheorem{corollary}[theorem]{Corollary}
\newtheoremstyle{note}{\topsep}{\topsep}{\slshape}{}{\scshape}{}{ }{}
\theoremstyle{note}
\theoremstyle{remark}
\newtheorem{remark}[theorem]{Remark}
\numberwithin{equation}{section}
\numberwithin{theorem}{section}
\setlist[description]{font=\normalfont}
\newcommand\reallywidehat[1]{%
\savestack{\tmpbox}{\stretchto{%
  \scaleto{%
    \scalerel*[\widthof{\ensuremath{#1}}]{\kern.1pt\mathchar"0362\kern.1pt}%
    {\rule{0ex}{\textheight}}
  }{\textheight}%
}{2.4ex}}%
\stackon[-6.9pt]{#1}{\tmpbox}%
}
\title{A new model of variable-length coupled pendulums: from hyperchaos to superintegrability}
\author{
Wojciech Szumi{\'n}ski \\
Institute of Physics, University of Zielona G\'ora, \\
 Licealna 9, PL-65--417,  Zielona G\'ora, Poland
\\ e-mail: w.szuminski@if.uz.zgora.pl
}
\begin{document}
\maketitle
\begin{abstract}
	This paper studies the dynamics and integrability of a variable-length coupled pendulum system. 
	The complexity of the model is presented by joining various numerical methods, such as the Poincar\'e cross-sections, phase-parametric diagrams, and Lyapunov exponents spectra. We show that the presented model is hyperchaotic, which ensures its nonintegrability. We gave analytical proof of this fact analyzing properties of the differential Galois group of variational equations along certain particular solutions of the system. We employ the Kovacic algorithm and its extension to dimension four to analyze the differential Galois group. 
	Amazingly enough, in the absence of the gravitational potential and for certain values of the parameters, the system can exhibit chaotic, integrable, as well as superintegrable dynamics.  To the best of our knowledge, this is the first attempt to
use the method of Lyapunov exponents in the systematic search for the first integrals of the system.  We show how to effectively apply the Lyapunov exponents as an indicator of integrable dynamics.  The explicit forms of integrable and superintegrable systems are given. 

\paragraph{Declaration} The article has been published in \textit{Nonlinear Dynamics}, and the final version is available at this link: \href{https://link.springer.com/article/10.1007/s11071-023-09253-5}{URL} 
\end{abstract}
\section{Introduction  and motivation}
	
Studies of nonlinear dynamics and chaos in pendulum systems are well-established but still are in great scientific activity.  Indeed, one can find numerous papers, books, and video clips concerning their highly nonlinear dynamics~\cite{Baker:05::}. The paradigm models such as  the double pendulum~\cite{Shinbrot:92::,Stachowiak:06::,Stachowiak:15::}, the spring pendulum~\cite{Broucke:73::,Lee:97::,Maciejewski:04::c}, the system of two coupled pendulums~\cite{Huynh2010,Huynh2013,Elmandouh:16::,Szuminski:20::}, the swinging Atwood machine~\cite{Tufillaro:84::,Tufillaro:85::,Tufillaro:90::,Szuminski:22::},  have been broadly studied by many researchers both theoretically and experimentally~\cite{Levien:93::,Kuhn:12::,Pujol:10::}. For instance, the model of coupled pendulums plays a crucial role in the theory of synchronizations~\cite{Dilao:09::,KOLUDA2014977,DUDKOWSKI20181} which have practical applications in laboratory experiments~\cite{PhysRevLett.72.2009, Rosenblum:03::,Ticos:00::,PALUS2007421,Palus:07::}. Moreover,   the system of two coupled pendulums has a direct relation with a  two coupled current-biased Josephson junction~\cite{Aronson:91::,Koyama:96::}, which is meaningful in a field of superconductivity and quantum information~\cite{PhysRevB.43.229,Han:01::}. We also mention papers~\cite{Xie:14::,Kapitaniak:14::,Wojewoda:16::}, where the phenomenon of chimera states in the systems of coupled pendulums was studied.

	In this paper, we want to explore more deeply the dynamics and integrability of a generalized model of coupled pendulums. Namely, it is a combination of a simple coupled pendulum system with the swinging Atwood machine. Thus, it can be treated as a variable-length coupled pendulum as well as the double-swinging Atwood machine with additional Hooke interactions.
	
Such models are of interest due to their potential physical applications in crane models, where understanding the motion and stability is crucial for safe and efficient operation~\cite{JU2006376,MR4459645,Freundlich:20::}, Moreover, the flexibility and maneuverability of the variable length pendulum system make it important in robotics, where dynamic stability is crucial~\cite{PLAUT20133768,YANG2022116727,SHARGHI2022117036}. 
Finally, the combination of a system of pendulums of variable lengths with the swinging Atwood machine may have applications in energy conversion and storage, where the swinging can be used to generate electricity~\cite{MARSZAL2017251,doi:10.1177/14613484221077474,ABOHAMER2023377}.
For a comprehensive review of variable-length pendulums and
their physical realizations please consult the new papers~\cite{Yakubu:21::,Yakubu:22::,Olejnik:23::}.
 
As the proposed model is a Hamiltonian system, its total energy, which is a conserved quantity, determines the global properties of motion. Typically, for relatively low values of energy, we may expect the system's motion to be regular with quasi-periodic and periodic oscillations. However, for sufficiently large values of the energy, the pendulum systems exhibit typically chaotic behaviour\cite{ mp:13::c,Szuminski:14::,Stachowiak:15::,Szuminski:20::,Szuminski:22::}. Complex dynamics in Hamiltonian systems can be effectively visualized with the help of numerical methods such as the Poincar\'e cross sections, phase-parametric (bifurcations) diagrams, Lyapunov's exponents, and power spectra. Each of these methods has its strengths and weaknesses. 
	For instance the Poincar\'e cross-sections
	provide qualitative information about the dynamics by presenting the coexistence of periodic, quasi-periodic, and chaotic motion.  Nevertheless, for technical reasons, it is mostly used for Hamiltonian systems with two degrees of freedom. Although the Lyapunov exponents method is useful for obtaining a quantitative description of chaos and can be effectively applied to a system with many degrees of freedom,  it does not distinguish periodic solutions from quasi-periodic ones. 
Therefore, to gain an exhaustive insight into the dynamics of the considered model,  we combine Lyapunov's exponents spectrums with bifurcation diagrams and the Poincar\'e cross sections.
	
Despite the advantages of numerical methods and techniques, they have one weak point. Namely, each numerical analysis can be performed only for fixed values of parameters describing a system. For pendulum systems, such parameters include the lengths of the pendulum arms, masses of bobs, spring stiffness, etc.
	 For various values of the parameters, the dynamics of the system may be significantly different, and for particular sets, the system may have first integrals, and it can even be integrable.
	This makes the numerical analysis less practical for hunting first integrals. 
	
To find new integrable cases or to prove the nonintegrability of the  considered model, 
	one needs a strong tool. An effective and strong tool is the so-called Morales-Ramis theory~\cite{Morales:99::,Morales:00::}.  It is based on an analysis of the differential Galois group of variational equations obtained by linearization of equations of motion along a  particular solution. The main theorem of this theory states that if a Hamiltonian system is integrable in the sense of Liouville, then the identity component of the differential Galois group of variational equations must be Abelian. The Morales--Ramis theory has  already been successfully applied to various important physical systems~\cite{Yagasaki:18::,Acosta:18::,Acosta:18b::,Huang:18::,Combot:18::,Mnasri:18::,Shibayama:18::,Maciejewski:18::}, also to non-Hamiltonian ones~\cite{Huang:18::,Szuminski:18::,Maciejewski:20e::,Szuminski:20b::}. In this way,  integrable and super-integrable systems have been found~\cite{	Elmandouh:18::,Szuminski:18a::,Szuminski:18b::}. 
	
	In most cases, however, the Morales-Ramis theory has been applied to Hamiltonian systems of two degrees of freedom for which the procedure of analysis of the differential Galois group is known thanks to the  Kovacic algorithm~\cite{Kovacic:86::}. In literature, there is a lack of exhaustive integrability analysis of pendulum systems with many degrees of freedom. This is due to a considerably more complicated analysis of the differential Galois group of high-dimensional variational equations. However, the presented model has a nice property, and an effective integrability analysis via the differential Galois approach and the Kovacic algorithm of dimension four~\cite{Combot:18b::} is possible.
	
	The rest of this paper proceeds as follows. In Sec.~\ref{sec:1} a description of the proposed model and its dynamics is given. We provide a qualitative and quantitive description of chaos and hyperchaos by joining numeral methods, such as Lyapunov's exponent's spectra, phase-parametric diagrams, and the Poincar\'e sections. In Sec.~\ref{sec:2} we perform an effective integrability analysis of the model with the help of the Morales--Ramis theory and the application of the Kovacic algorithms of dimensions two and four. In Sec.~\ref{sec:3} the dynamics and integrability of the coupled pendulum system in the absence of the gravitational potential are treated. Nonintegrability, integrability, and superintegrability for certain sets of parameters of the system are shown. In Sec.~\ref{sec:4}  final comments and conclusions are drawn. Sec.~\ref{sec:5} contains an Appendix in which the  Kimura theorem concerning the solvability of the Gauss hypergeometric differential equations is given. 
	
	\section{\label{sec:1} The system  and its dynamics}
	
	\begin{figure}[t!]
	\begin{center}
		\resizebox{85mm}{!}{\begingroup%
		  \makeatletter%
		\providecommand\color[2][]{%
			\errmessage{(Inkscape) Color is used for the text in Inkscape, but the package 'color.sty' is not loaded}%
			\renewcommand\color[2][]{}%
		}%
		\providecommand\transparent[1]{%
			\errmessage{(Inkscape) Transparency is used (non-zero) for the text in Inkscape, but the package 'transparent.sty' is not loaded}%
			\renewcommand\transparent[1]{}%
		}%
		\providecommand\rotatebox[2]{#2}%
		\newcommand*\fsize{\dimexpr\f@size pt\relax}%
		\newcommand*\lineheight[1]{\fontsize{\fsize}{#1\fsize}\selectfont}%
		\ifx\svgwidth\undefined%
		\setlength{\unitlength}{248.83760708bp}%
		\ifx\svgscale\undefined%
		\relax%
		\else%
		\setlength{\unitlength}{\unitlength * \real{\svgscale}}%
		\fi%
		\else%
		\setlength{\unitlength}{\svgwidth}%
		\fi%
		\global\let\svgwidth\undefined%
		\global\let\svgscale\undefined%
		\makeatother%
  \begin{picture}(1,0.61995447)%
			\put(0.10415659,0.26045207){\color[rgb]{0,0,0}\makebox(0,0)[lt]{\smash{$m_1$}}}%
			\put(0,0){\includegraphics[width=\unitlength,page=1]{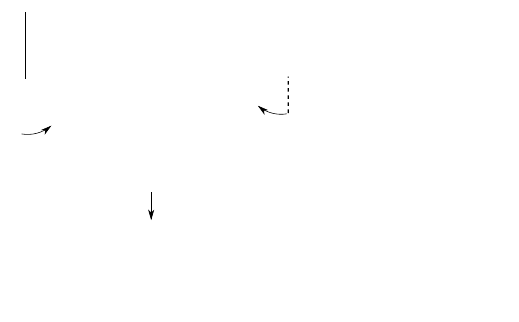}}%
			\put(0.42301579,0.28877563){\color[rgb]{0,0,0}\makebox(0,0)[lt]{\smash{$ m_2$}}}%
			\put(0.26133694,0.33682289){\color[rgb]{0,0,0}\makebox(0,0)[lt]{\smash{$k_2$}}}%
			\put(0.92133216,0.12961847){\color[rgb]{0,0,0}\makebox(0,0)[lt]{\smash{$k_1$}}}%
			\put(0.28112123,0.48963458){\color[rgb]{0,0,0}\makebox(0,0)[lt]{\smash{$a$}}}%
			\put(0.31,0.21710442){\color[rgb]{0,0,0}\makebox(0,0)[lt]{\smash{$g$}}}%
			\put(0.04883007,0.38526667){\color[rgb]{0,0,0}\makebox(0,0)[lt]{\smash{$\vartheta$}}}%
			\put(0.52520297,0.41947521){\color[rgb]{0,0,0}\makebox(0,0)[lt]{\smash{$\varphi$}}}%
			\put(0.12423339,0.37546204){\color[rgb]{0,0,0}\makebox(0,0)[lt]{\smash{$\ell$}}}%
			\put(0.44354039,0.40638972){\color[rgb]{0,0,0}\makebox(0,0)[lt]{\smash{$\ell$}}}%
			\put(0.01597873,0.43860278){\color[rgb]{0,0,0}\makebox(0,0)[lt]{\smash{$x$}}}%
			\put(0.06544636,0.49129843){\color[rgb]{0,0,0}\makebox(0,0)[lt]{\smash{$y$}}}%
			\put(0,0){\includegraphics[width=\unitlength,page=2]{geom.pdf}}%
			\put(0.90874008,0.3627055){\color[rgb]{0,0,0}\makebox(0,0)[lt]{\smash{$X$}}}%
			\put(0.94628674,0.24745895){\color[rgb]{0,0,0}\makebox(0,0)[lt]{\smash{$M$}}}%
			\put(0,0){\includegraphics[width=\unitlength,page=3]{geom.pdf}}%
			\end{picture}%
			\endgroup}
\caption{ (Color online) Geometry of the variable-length coupled pendulums moving in
				the gravitational and Hooke’s potentials. Here $M$ and $m_1,m_2$ are the masses linked by an inextensible string of lengths $l_1=l+a$ (red) and  $l_2=l$ (green). Masses $m_1$ and $m_2$ can swing, whereas $M$ is constrained to move solely in the vertical direction. The Hamiltonian function that describes the model is defined in~\eqref{eq:hamiltonian}. \label{fig:1}}
	\end{center}
	\end{figure}

	In Fig.~\ref{fig:1}, the geometry of the system under consideration is presented. The model consists of three masses $M, m_1, m_2$, two inextensible strings of lengths $l_1$ (red) and  $l_2$ (green), and two springs with Yang's modulus  $k_1, k_2$, respectively.
	The distance between pulleys equals the rest length of spring  $k_2$ and is denoted by $a$. The natural length of the spring with Yang's modulus  $k_1$ is assumed to be zero.
	Masses $m_1$ and $m_2$ are mechanically linked with mass~$M$ and they are allowed to oscillate in a plane. Thus, they form the variable-length two-pendulum system coupled by the spring $k_1$. 
Mass $M$ plays the role of a counterweight and it moves vertically. The pendulums and mass $M$  move under the constant vertical gravitational field and their interactions are facilitated through elastic forces.	

The Lagrange function of the system is as follows
	\begin{equation}
	\label{eq:lagrangian_cartesian}
	\begin{split}
	&L=T-V_g-V_k,\\
	&T=\frac{1}{2}\left(M\dot X^2+m_1\left(\dot x_1^2+\dot y_1^2\right)+m_2\left(\dot x_2^2+\dot y_2^2\right)\right),\\
	&V_g=-g(M X+m_1 x_1+m_2 x_2)\\
	&V_k=\frac{1}{2}k_1\left(l-X\right)^2+\frac{1}{2}k_2\left(\sqrt{\Delta x^2+\Delta y^2}-a\right)^2,
	\end{split}
	\end{equation}
	where $\Delta x=x_2-x_1$ and $\Delta y=y_2-y_1$.
	The motion of the system is restricted by the holonomic constraints, i.e., lengths of the strings are constant
	 \begin{equation}
	\sqrt{x_1^2+y_1^2}+a+X=l_1,\qquad \sqrt{x_2^2+(y_2-a)^2}+X=l_2.
	\end{equation} Therefore,
	\begin{equation}
	\label{eq:XX}
\sqrt{x_1^2+y_1^2}-\sqrt{x_2^2+(y_2-a)^2}=l_1-(l_2+a).
	\end{equation}
To simplify further analysis and reduce the number of parameters,  we assume $l_1=l+a$, and $l_2=l$. Next, we introduce  new coordinates according to the constraints:
	\begin{equation}
	\label{eq:polarki}
	\begin{split}
	x_1&=\ell\cos\vartheta,\quad y_1=\ell\sin\vartheta,\\
	x_2&=\ell\cos\varphi,\quad y_2=a+\ell\sin\varphi,\\
	X&=l-\ell.
	\end{split}
	\end{equation}	
	In these coordinates, the Lagrange function~\eqref{eq:lagrangian_cartesian}, takes the form
	\begin{equation}
		\begin{split}
			\label{eq:lagrangian}
			&L=T-V_g-V_k,\\
			&T=\frac{1}{2}\left((M+m_1+m_2)\dot \ell^2+m_1\ell^2\dot \vartheta^2+m_2\ell^2\dot \varphi^2\right) \\
			&V_g=g\ell(M-m_1\cos\vartheta-m_2\cos\varphi),\\
			&V_k=\frac{1}{2}k_1 \ell^2+\frac{1}{2}k_2(d-a)^2,
		\end{split}
	\end{equation}
	where  $d$ is a length of the second spring  $k_2$, given by
	\begin{equation}
		d:=\sqrt{\left(\ell \cos\varphi-\ell\cos\vartheta\right)^2+\left(a+\ell\sin\varphi-\ell\sin\vartheta\right)^2}.
	\end{equation}
	Performing the Legendre transformation
	\begin{equation}
		\begin{split}
			&p_\ell=\pder{L}{\dot \ell}=(M+m_1+m_2)\dot \ell,\\ & p_\vartheta=\pder{L}{\dot \vartheta}=m_1 \ell^2\dot\vartheta,\\ & p_\varphi=\pder{L}{\dot\varphi}=m_2 \ell^2\dot\varphi,
		\end{split}
	\end{equation}
	we obtain the Hamiltonian function
	\begin{equation}
		\label{eq:hamiltonian}
		\begin{split}
			&H=\dfrac{1}{2}\left(\dfrac{p_\ell^2}{M+m_1+m_2}+\dfrac{p_\vartheta^2}{m_1\ell^2}+\dfrac{p_\varphi^2}{m_2\ell^2}\right)\\
			&	+g\ell(M-m_1\cos\vartheta-m_2\cos\varphi)+\frac{1}{2}k_1 \ell^2+\frac{1}{2}k_2(d-a)^2.\end{split}
	\end{equation}

	The Hamiltonian equations of motion, generated by Hamiltonian~\eqref{eq:hamiltonian}, form a six-dimensional system of the  first-order ordinary  differential equations
	\begin{equation}
	\begin{aligned}
	\label{eq:vh0}
	&\dot \ell=\frac{\partial H}{\partial p_\ell}, && \dot \vartheta=\frac{\partial H}{\partial p_\vartheta},&&\dot \varphi=\frac{\partial H}{\partial p_\varphi},\\
&\dot  p_{\ell}	=-\frac{\partial H}{\partial \ell}, && \dot  p_{\vartheta}	=-\frac{\partial H}{\partial \vartheta}, && \dot  p_{\varphi}	=-\frac{\partial H}{\partial \varphi}.
	\end{aligned}
	\end{equation}
	The explicit forms of the right-hand sides of~\eqref{eq:vh0} are as follows
	\begin{equation}
		\begin{cases}
			\label{eq:vh}
			\dot \ell=\dfrac{p_\ell}{M+m_1+m_2},\\[0.3cm]
			\dot p_\ell=\dfrac{m_2p_\vartheta^2+m_1p_\varphi^2}{m_1m_2\ell^3}-g(M-m_1\cos\vartheta-m_2\cos\varphi)\\[0.3cm]
			-k_1\ell-k_2\left[d-a\right]\left[a(\sin\varphi-\sin\vartheta)+2\ell(1-\cos(\vartheta-\varphi))\right]/d,
			\\[0.2cm] \dot\vartheta=\dfrac{p_\vartheta}{m_1\ell^2},\\[0.3cm] \dot p_{\vartheta}=-m_1 g \ell \sin\vartheta+k_2\ell\left[d-a\right]\left[a \cos\vartheta-\ell\sin(\vartheta-\varphi)\right]/d,\\[0.2cm] \dot\varphi=\dfrac{p_\varphi}{m_2\ell^2},\\[0.3cm]
			\dot p_{\varphi}=-m_2 g \ell \sin\varphi-k_2\ell\left[d-a\right]\left[a \cos\varphi-\ell\sin(\vartheta-\varphi)\right]/d.
		\end{cases}
	\end{equation}
	\subsection{The Lyapunov exponents diagrams}
	\begin{figure}[htp]
		\centering                              
		\includegraphics[width=.95\linewidth]{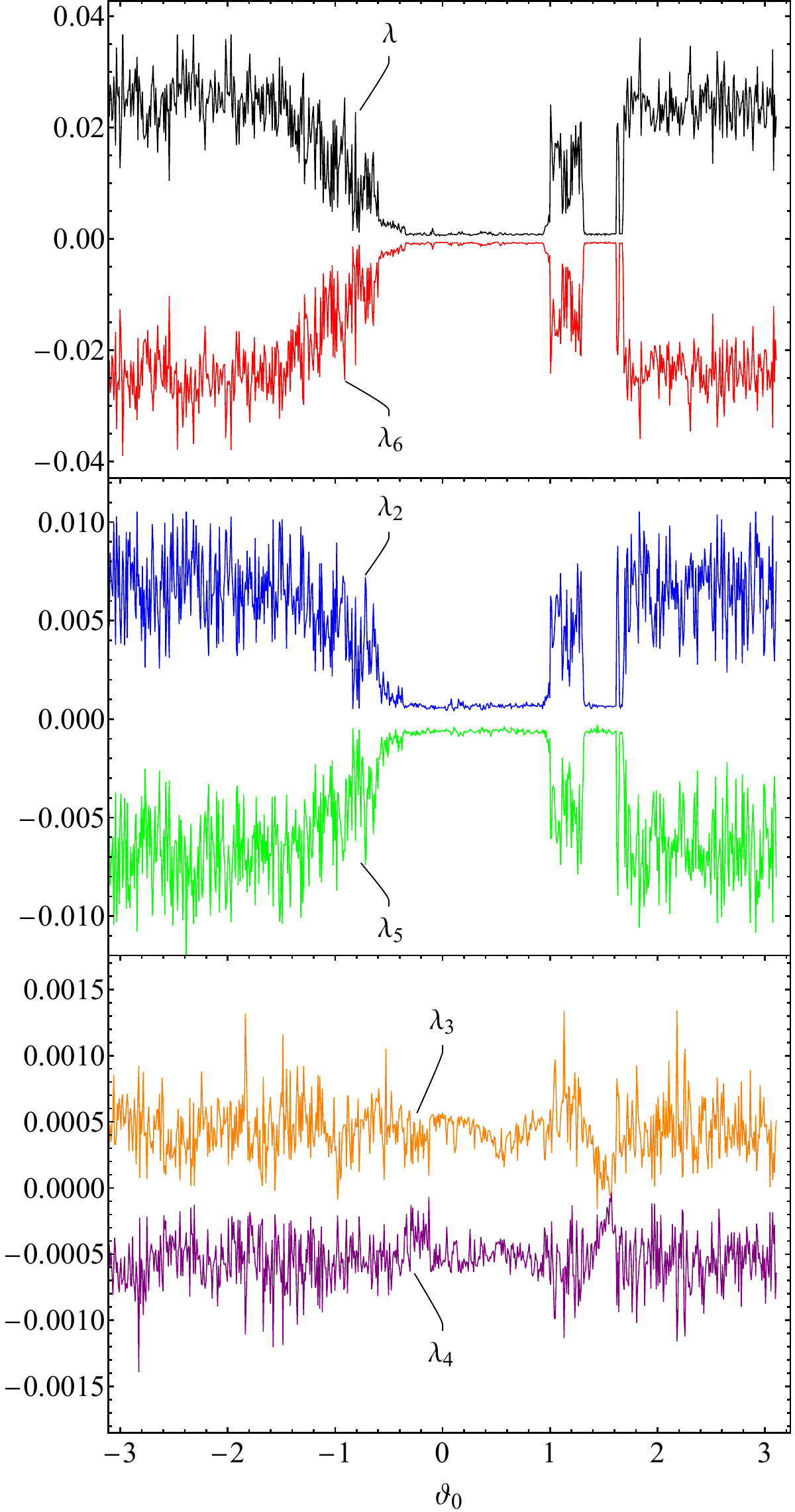}\hspace{-0.2cm}
		\caption{(Color online) The Lyapunov exponents spectrum of system~\eqref{eq:vh} versus the initial swinging angle $\vartheta_0\in (-\pi,\pi)$ with the initial condition $(\ell_0=1,\, \varphi_0=\pi/2,\,	p_{\ell 0}=0.002, \, p_{\vartheta 0}=0.001, \, p_{\varphi 0}=0.001)$. The constant parameters where chosen as: $M=2,\, m_1=2,\, m_2=1,\, a=5,\, g=1,\, k_1=0.1,\, k_2=0.25$. Here $\{\lambda,\lambda_2,\lambda_3,\lambda_4,\lambda_5,\lambda_6\}$ denotes the full spectrum, where $\lambda\equiv \lambda_1$ is the largest Lyapunov exponent. Intervals with two positive Lyapunov exponents are responsible for the hyperchaotic motion of the system, while regions with $\lambda\approx 0$ correspond to regular (non-chaotic) behavior. \label{fig:spectrum}}
	\end{figure}
	\begin{figure*}[t]
	\centering                              
	\includegraphics[width=.35\linewidth]{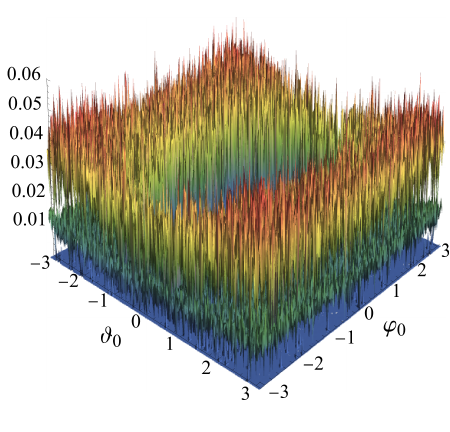}\hspace{-0.2cm} \
	\includegraphics[width=.3\linewidth]{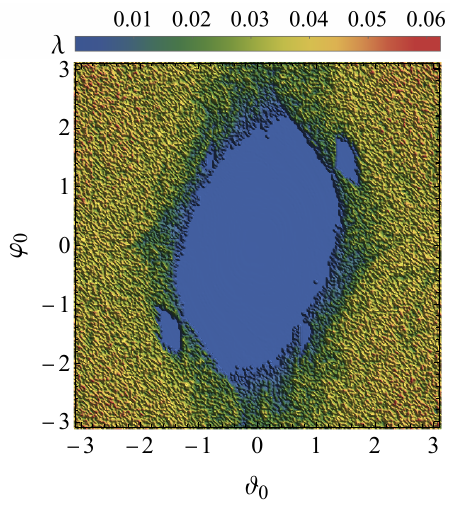}\ 
	\includegraphics[width=.3\linewidth]{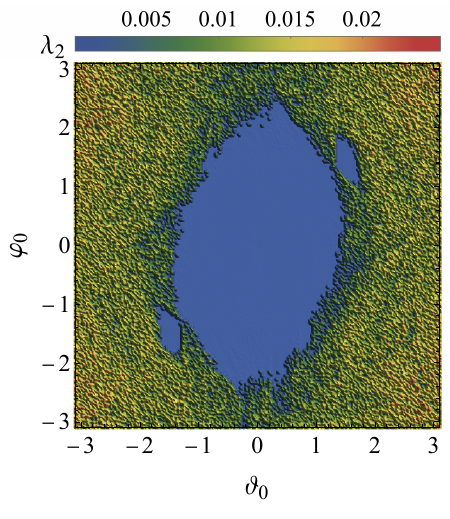}\hspace{-0.2cm}
	\caption {(Color online) Three-dimensional Lyapunov's exponents diagram of system~\eqref{eq:vh} depicted in $(\vartheta_0,\varphi_0,\lambda)$-space  and  the projections of $\lambda$ and $\lambda_2$  to $(\vartheta_0,\varphi_0)$-plane.	 The colorful diagram was obtained by numerically computing  Lyapunov's exponents on a grid of $500\times 500$ values of $(\vartheta_0,\varphi_0)$ taken over the range $(-\pi,\pi)$ with the initial condition $(\ell_0=1,\,	p_{\ell 0}=0.002,\, p_{\vartheta 0}=0.001,\, p_{\varphi 0}=0.001)$.  Exemplary constant parameters where chosen as: $M=2,\, m_1=2,\, m_2=1,\, a=5,\, g=1,\, k_1=0.1,\, k_2=0.25$.	
		 The central part of the diagram corresponds to regular (non-chaotic) dynamics, while for larger values of initial swing angles, the hyperchaotic motion takes place.	\label{fig:lyap3da}}
\end{figure*}
In this section, we present the complexity of the system and we study its hyperchaotic nature through the analysis of Lyapunov's exponents. 
The method of Lyapunov exponents is an essential tool for quantifying chaos in dynamical systems. It measures the exponential divergence of two close trajectories (orbits) in a phase space. According to the chaos theory, chaos appears when one Lyapunov's exponent is positive, while hyperchaos is characterized by the presence of at least two positive Lyapunov exponents~\cite{Szuminski:18::,Szuminski:20b::}.	For the computation of Lyapunov exponent spectra, we employ the standard algorithm introduced by Benettin et al~\cite{Benettin:80::,Wolf:85::}. It is based on successive integrations of variational equations with applications of the Gram-Schmidt orthonormalization procedure.  

In this paper, we utilize the standard algorithm implemented in Mathematica by Sandri~\cite{Sandri:96::}. However, for
faster and more accurate results, we employ the NDSolve solver instead of Euler's method. We adopt a sufficient amount of~$k$ steps so that the convergence of the Lyapunov exponents is ensured. 
The working precision for the entire numerical analysis is set to at least 12, ensuring the maintenance of a precision of 12 digits during internal computations. Moreover,
the constanticity of energy first integral $H=E$, as given in~\eqref{eq:hamiltonian}, is used for the verification of the numerical integrations. We keep the relative and absolute errors up to~$10^{-11}$.

	Fig.~\ref{fig:spectrum} presents a  spectrum of   Lyapunov exponents for system~\eqref{eq:vh}, computed for constant values of the parameters
			\begin{equation}
			\begin{split}
			\label{eq:parametry}
		M&=2,\quad	m_1=2,\quad m_2=1,\quad a=5,\\
		 g&=1,\quad k_1=\frac{1}{10},\quad k_2=\frac{1}{4},
		\end{split}
		\end{equation}
	with the initial conditions
	\begin{equation}
		\begin{aligned}
			\label{eq:initial_cond}
&	\ell_0=1,&&\varphi_0=\frac{\pi}{2}, &&\vartheta_0\in (-\pi,\pi), \\ &
p_{\ell 0}=0.002,&& p_{\vartheta 0}=0.001,&& p_{\varphi 0}=0.001,
\end{aligned}
	\end{equation}
where $\vartheta_0$ is treated as the control parameter.  The considered system has six-dimensional phase space, therefore there are six Lyapunov exponents $\Lambda=\{\lambda,\lambda_2,\lambda_3,\lambda_4,\lambda_5,\lambda_6\}$,  where $\lambda\equiv \lambda_1$ is the largest Lyapunov exponent. 	Fig.~\ref{fig:spectrum} illustrates the impact of the initial swing angle $\vartheta_0$ on the system dynamics. In regions, where the Lyapunov exponents are larger than a numerical cut-off (typically  0.002 in our case), the separation is exponential, indicating hyperchaotic dynamics. Conversely,  when all Lyapunov exponents tend towards zero, the separation is slower than exponential, and thus the dynamics is regular (non-chaotic).  

Because the considered model is the Hamiltonian one, its Lyapunov exponents spectrum exhibits distinctive properties.  Firstly, 
the existence of the first integral, which is the conservation of the energy $H = E$, ensures that one pair of Lyapunov's exponents is zero~\cite{Pikovsky:16::}. Moreover, 
	the preservation of volume in phase space (Liouville’s theorem)  implies that the sum of all Lyapunov exponents is equal to zero~\cite{Vallejo:17::}. Finally,     due to the time reversibility in the Hamiltonian vector field~\eqref{eq:vh}, the Lyapunov exponents appear in additive inverse pairs.   As we are dealing with  three-degrees of freedom Hamiltonian system, the possible spectrum is given by $\Lambda=\{\lambda,\lambda_2,\lambda_3,-\lambda_3,-\lambda_2,-\lambda\}$, where $\lambda_3\approx 0$. In the considered case, the maximal value of $\lambda$ occurs in the neighborhood of the point $\vartheta_0=3.11$, where $\lambda\approx 0.058$. 
	
	Fig.~\ref{fig:lyap3da} presents a three-dimensional diagram of Lyapunov exponents  $(\lambda,\lambda_2,\lambda_3)$ as a function of initial swing angles $(\vartheta_0,\varphi_0)\in (\-\pi,\pi)$. On the right, the projections of $\lambda$ and $\lambda_2$  onto the  $(\vartheta_0,\varphi_0)$-plane, with color scales associated with the magnitudes of exponents. 
	 These colorful diagrams were obtained by numerically computing  Lyapunov exponents on a grid of $500\times 500$ values of $(\vartheta_0,\varphi_0)$ over the range $(-\pi,\pi)$.  It shows how the change of the initial swing angles $\vartheta_0$ and $\varphi_0$ of the pendulums (with almost initial velocities) affects the dynamics of the whole system. As expected, the diagram is quite symmetric about zero. Within it, we can observe the coexistence of regular and hyperchaotic dynamics,  depending on values of the control parameters $\vartheta_0$ and $\varphi_0$.
	For sufficiently small amplitudes of $(\vartheta_0,\varphi_0)$, the system performs regular (non-chaotic) oscillations. However, for larger values of the initial angles, hyperchaotic motion prevails, reaching its maximum intensity around the points $(\vartheta_0,\varphi_0) \approx \pm \pi$. 

	In Fig.~\ref{fig:lyap3da}, we can notice a  very good correspondence between $\lambda$ and $\lambda_2$, i.e., the regions with $\lambda$ and $\lambda_2$ larger than zero coincidence. Therefore, to specify values of $(\vartheta_0,\varphi_0)$ for which the motion of the system is hyperchaotic, we can limit ourselves to plotting $\lambda$ only. This is because if there exists an additional first integral inside the system, then it will be independent of initial conditions. 
Fig.~\ref{fig:lyap2d}  illustrates the Lyapunov diagrams for the largest exponent in the plane of the initial swinging angles $(\vartheta_0,\varphi_0)$ with increasing values of mass $M$. The color scale is proportional to the magnitude of $\lambda$. As we can notice, the situation becomes more complex. For $M=3$, the regular central part of the diagram decays and we observe the appearance of hyperchaotic behavior of the system even for very small values of the initial swing angles of the pendulums. In fact, at the central part of the Lyapunov diagram,  $\lambda$ reaches its maximal value. Further increments in the mass value $M$ increase the percentage value of the area in the diagram where the motion is hyperchaotic and the value of $\lambda$ is increasing as well. One can observe regular islands bounded by hyperchaotic regions.  Finally, for $M=9$, the entire region corresponding to the regular motion decays into global hyperchaos. For $M=9$, the largest Lyapunov exponent reaches its maximal value, up to $\lambda=0.61$.
	
	\begin{figure*}[htp]
		\centering                              
		\includegraphics[width=.3\linewidth]{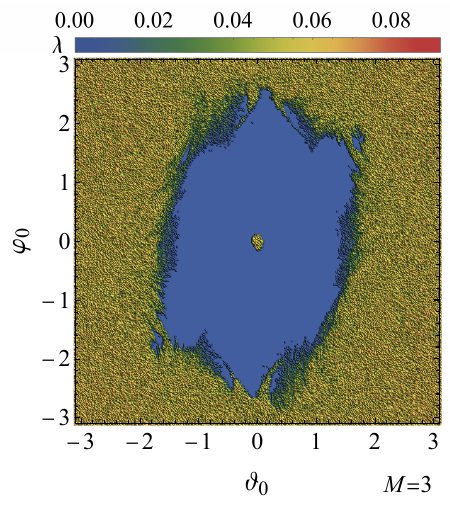}\hspace{-0.2cm}	
		\includegraphics[width=.3\linewidth]{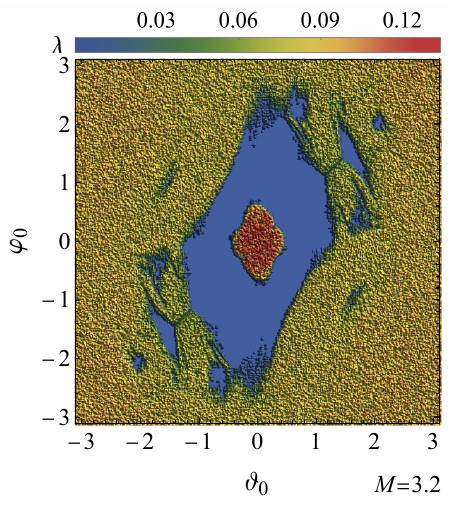}\hspace{-0.2cm}
			\includegraphics[width=.3\linewidth]{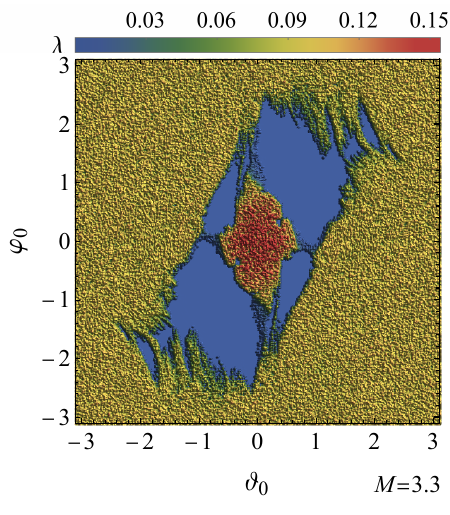}\\
		\includegraphics[width=.3\linewidth]{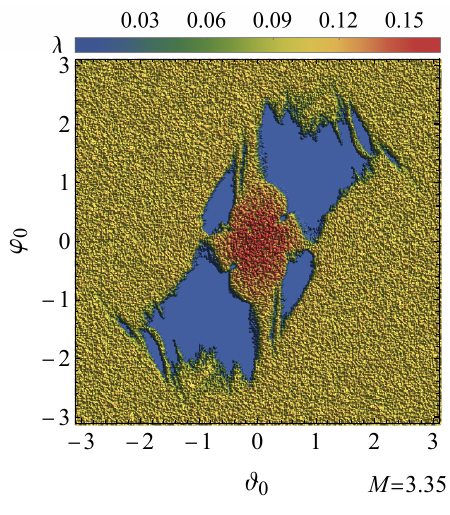}
		\includegraphics[width=.3\linewidth]{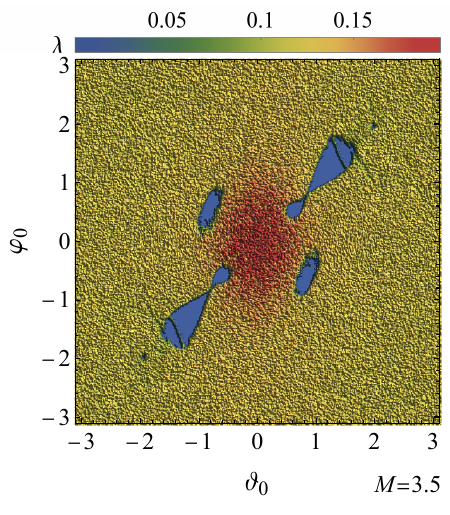}\hspace{-0.2cm}	
			\includegraphics[width=.3\linewidth]{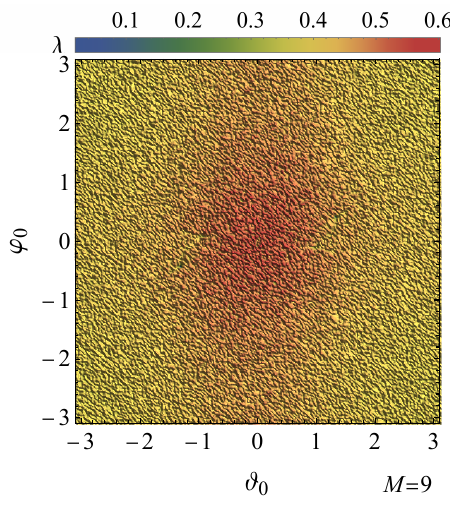}
		\caption{(Color online) The Lyapunov diagrams for the largest exponent in the plane of initial swinging angles $(\vartheta_0,\varphi_0)$ computed for varying values of  mass~$M$.   The colorful diagram was obtained by numerically computing  Lyapunov's exponents on a grid of $500\times 500$ values of $(\vartheta_0,\varphi_0)$ taken over the range $(-\pi,\pi)$ with the initial condition $(\ell_0=1,\,	p_{\ell 0}=0.002,\, p_{\vartheta 0}=0.001,\, p_{\varphi 0}=0.001)$. 
			The color scale is proportional to the magnitude of $\lambda$. The remaining parameters were chosen as: $m_1=2,\, m_2=1,\, a=5,\, g=1,\, k_1=0.1,\, k_2=0.25$.  Regions with $\lambda\approx 0$ correspond to regular (non-chaotic) dynamics, while regions with $\lambda>0$ are responsible for the hyperchaotic behavior of the system. As is evidenced by higher values of $M$, the regular regions divergence into hyperchaos. 
			\label{fig:lyap2d}}
	\end{figure*}
	\begin{figure*}[htp]
		\centering
		\subfigure[\,Global view]{
		\includegraphics[width=.4\linewidth]{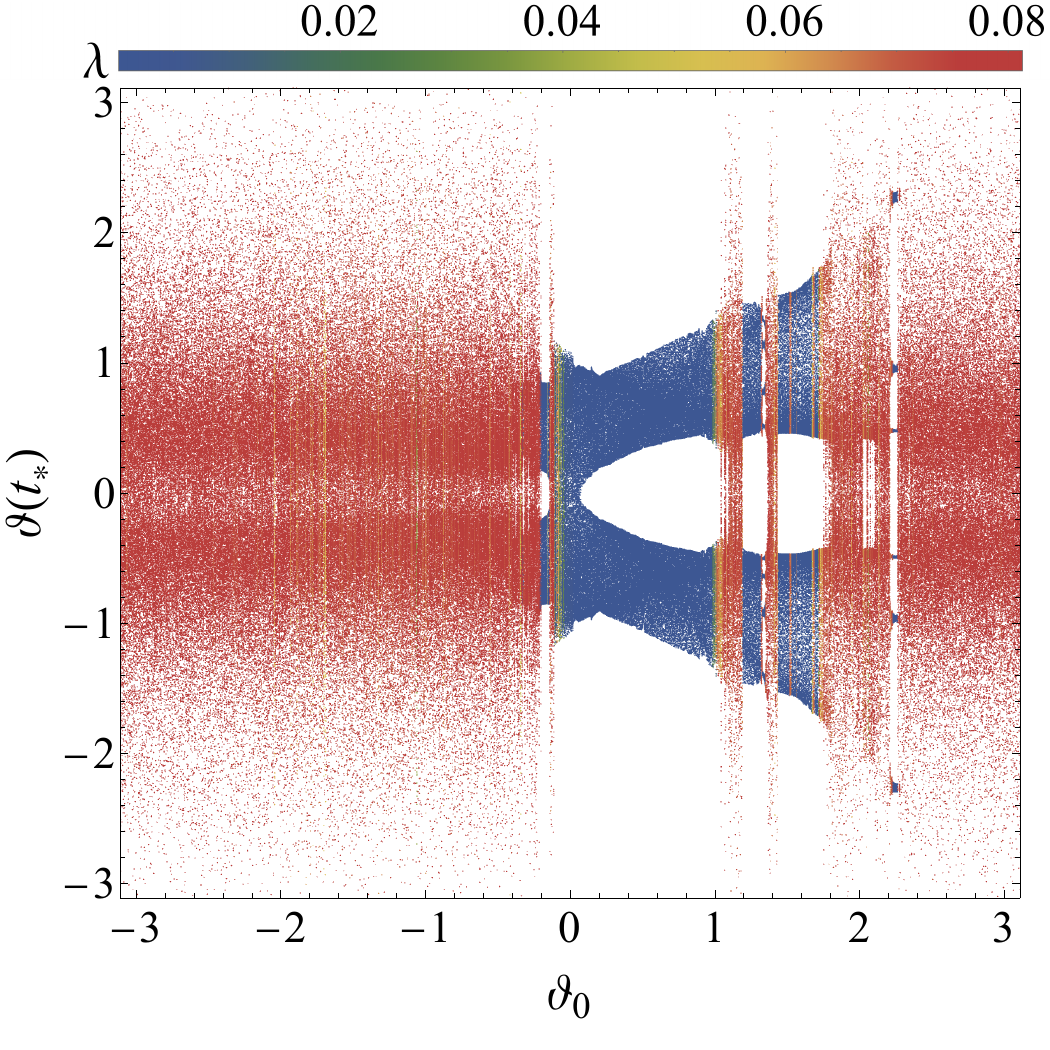}}
	\subfigure[\ Magnification presenting ,,periodic windows'' between chaotic layers]{
		\includegraphics[width=.4\linewidth]{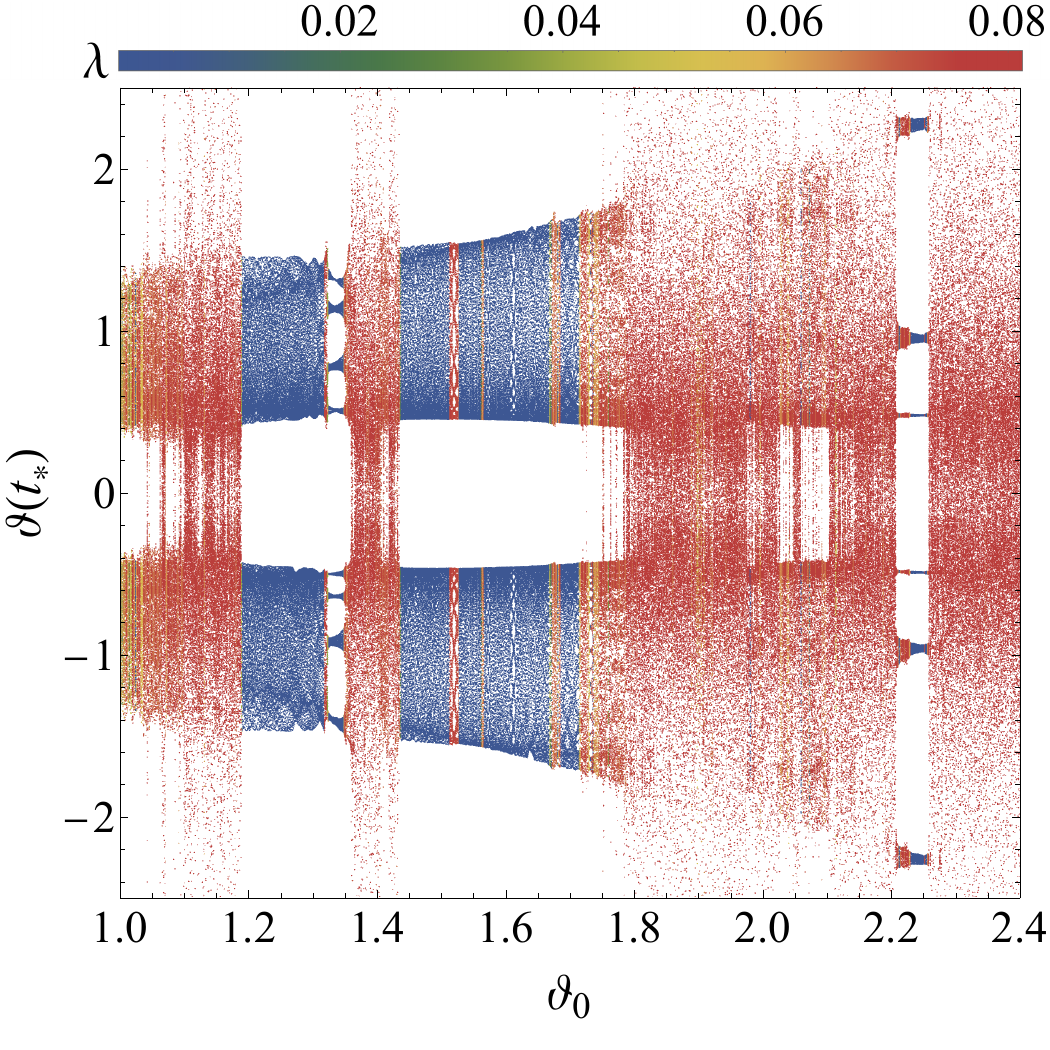}}
		\caption{(Color online) The phase-parametric diagram of system~\eqref{eq:vh} versus the initial swing angle $\vartheta_0$. Initial conditions and values of the parameters are taken from Fig.~\ref{fig:lyap2d} with $M=3.3$ and $\varphi_0=\pi/2$, while $\vartheta_0$ is treated as the control parameter.  Here, $\vartheta'(t_\star)= 0$ with  $\vartheta''(t_\star)< 0$,   for some~$t_\star$. 
			  The diagram is combined with the largest Lyapunov exponent $\lambda$. The color scale is proportional to the magnitude of $\lambda$. A very good agreement of the phase-parametric diagram with $\lambda$ is observed. The coexistence of periodic, quasi-periodic, and chaotic orbits together with ,,periodic windows'' between chaotic layers is visible.}
		\label{fig:lyap_bif}
	\end{figure*}
	
	\begin{figure*}[t]
		\centering \subfigure[{\,Periodic orbits $\vartheta_0=2.24, \varphi_0=\pi/2$}]
		{
			\includegraphics[width=.32\linewidth]{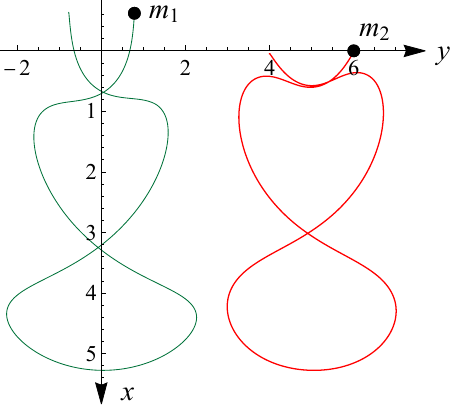}}
		\subfigure[{\, Quasi-periodic orbits $\vartheta_0=1.3, \varphi_0=\pi/2$}]
		{
			\includegraphics[width=.32\linewidth]{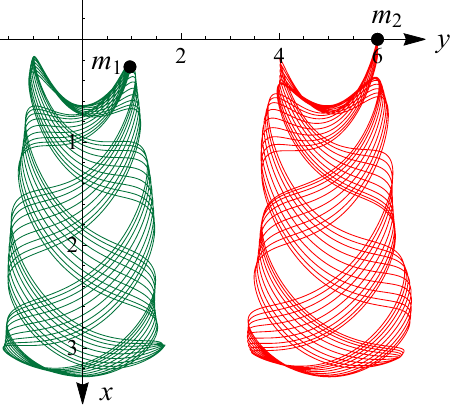}}
		\subfigure[\,  Chaotic  orbits $\vartheta_0=-3, \varphi_0=\pi/2$]{
			\includegraphics[width=.32\linewidth]{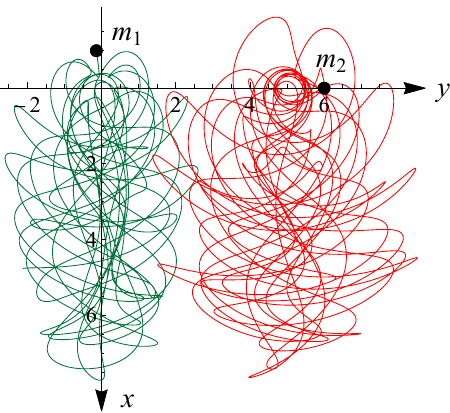}}
		\caption{(Color online) The periodic, quasi-periodic, and chaotic trajectories of coupled pendulums plotted in Cartesian plane. Dots $m_1$ and $m_2$ denote the initial swing angles $(\vartheta_0, \varphi_0)$. Respective values of $\vartheta_0, \varphi_0$ were taken from the phase-parametric diagram and corresponds well to the Lyapunov diagram with marked periodic orbits visible in Fig.~\ref{fig:lyap_period}. }
		\label{fig:phase}
	\end{figure*}
	\begin{figure}
		\centering
		\includegraphics[width=.8\linewidth]{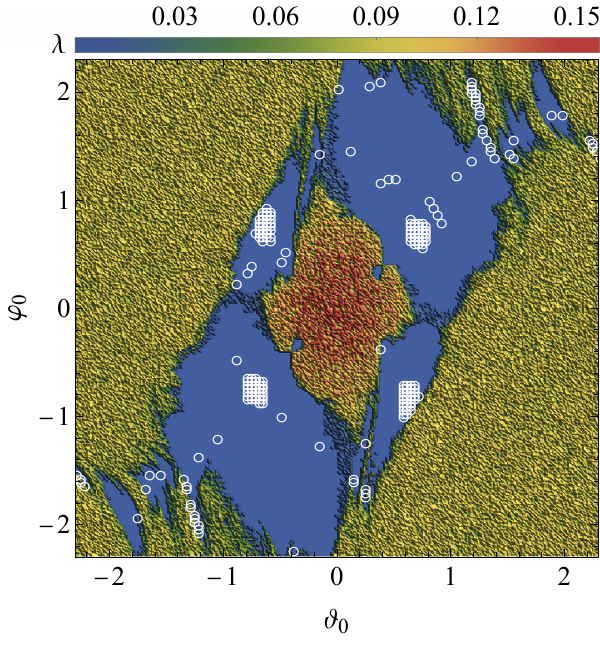}\\
				\includegraphics[width=.8\linewidth]{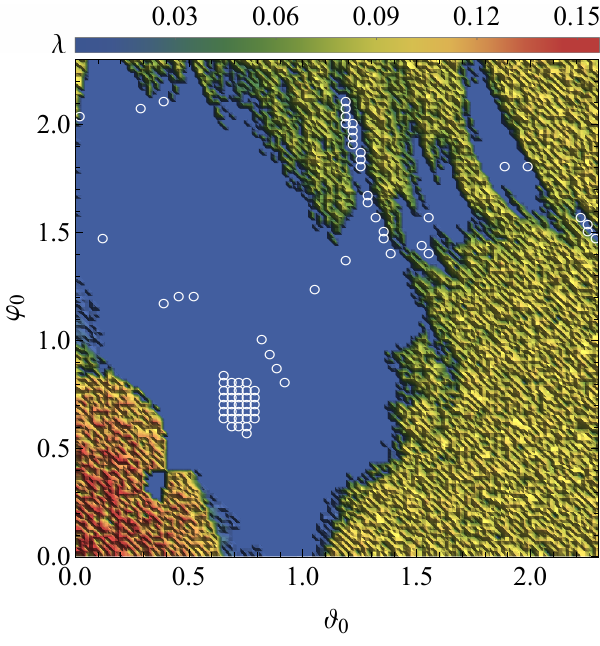}\\
		\caption{(Color online) Magnifications of the Lyapunov diagram, taken from Fig.~\ref{fig:lyap2d} for $M=3.3$, presenting the number of periodic orbits at $(\vartheta_0,\varphi_0)$-plane. Each depicted circle corresponds to values of the initial swing angles $(\vartheta_0,\varphi_0)$ for which the motion of the system is periodic.}
		\label{fig:lyap_period}
	\end{figure}

	\subsection{Phase-parametric diagram}
	The two-parameter diagrams of the Lyapunov exponents, visible in Fig.~\ref{fig:lyap2d}, provide quantitative insight into the dynamics of the considered model by specifying intervals of the initial swing angles $(\vartheta_0,\varphi_0)$ for which the motion is either regular or hyperchaotic.   However, because our model is the Hamiltonian system, we were not able to deduce from these figures whether the observed regular patterns  (where $\lambda\approx 0$) correspond to periodic or quasi-periodic motion. 
To make such a distinction, the construction of phase-parametric (bifurcation) diagrams is helpful. Briefly speaking, a phase-parametric diagram shows periodic orbits and their frequency ratios, routes to the chaos, and the periodic windows between chaotic regions by plotting the dependence of a chosen state variable as a function of a certain control parameter~\cite{Stachowiak:15::,Szuminski:20::}.
	
Fig.~\ref{fig:lyap_bif} illustrates the phase-parametric diagram of the system computed for a one-parameter family of initial conditions taken from Fig.~\ref{fig:lyap2d} with $M=3.3$ and $\varphi_0=\pi/2$, where $\vartheta_0\in (-\pi,\pi)$ is treated as the control parameter.
 In this calculated phase-parametric diagram, we display the dependence of the maximal values (amplitudes) of $\vartheta(t)$ on the initial swing angle  $\vartheta_0\in (-\pi,\pi)$. That is, for a given initial condition, we consecutively integrate equations of motion~\eqref{eq:vh}, and we build the diagram by collecting points $\vartheta(t_\star)$  when $\vartheta'(t_\star)=0$ and $\vartheta''(t_\star)<0$. As a result, we obtain a pattern on the plane, which can be easily interpreted. To enhance the analysis, we overlay the obtained phase-parametric diagram with the largest Lyapunov exponent~$\lambda$. The color scale is associated with the magnitude of~$\lambda$.
In Fig.~\ref{fig:lyap_bif}, we observe a very good agreement of the phase-parametric diagram with~$\lambda$.  Indeed, for $\lambda>0$, the phase-parametric diagram illustrates the complex dynamics of the system visible in terms of random-looking points.  Conversely, for $\lambda\approx $ the shape of the phase-parametric diagram is regular.  However, what was not visible on the Lyapunov diagrams, 
inside the regular regimes of the phase-parametric diagram,    quasi-periodic and periodic orbits are distinguishable.  For better understanding,   we show in Fig.~\ref{fig:phase} exemplary periodic, quasi-periodic, and chaotic trajectories plotted in the Cartesian plane. The initial conditions were drawn from the phase-parametric diagram. 
Dots $m_1$ and $m_2$ state for initial amplitudes $(\vartheta_0,\varphi_0)$ of the pendulums. 
 Despite the general hyperchaotic nature of the system,  we can still find values of $(\vartheta_0,\varphi_0)$ for which the motion is periodic. It is especially visible in the magnification of the phase-parametric diagram taken over the range $\vartheta_0\in(1,2.4)$, where the periodic gaps between chaotic layers are visible.

All in all, the  Lyapunov exponents spectrum is a very useful tool to measure the complexity and the strength of chaos in the system dynamics, while phase-parametric diagrams are effective in identifying periodic orbits and their characteristics.
Therefore, let us combine these two methods more systematically.
	We do this in the following way.
	For the given values of parameters~\eqref{eq:parametry} 
and initial conditions~\eqref{eq:initial_cond},
	we build a grid of $500\times 500$ values of $(\vartheta_0,\varphi_0)$ over the range $(-\pi,\pi)$. Then,   for each initial condition, we compute the Lyapunov exponents. If $\lambda>0$,  the corresponding initial conditions are excluded from the set. As a result, we obtain a grid  $B$ of $n$ initial conditions for which the motion of the system is non-chaotic. The second step is to numerically integrate  equations of motion~\eqref{eq:vh} for $(\vartheta_0,\varphi_0)\in B$, and to built  diagrams by collecting points  $\vartheta(t_\star)$ when $\vartheta'(t_\star)=0$ and $\vartheta''(t_\star)<0$ for a certain $t_\star$.  This process yields  $n$ lists with intersecting points $\vartheta(t_\star)$.  Within each list,  we look for the scheme of repeating values of $\vartheta(t_\star)$ in a specific order. 
 In this way, a rough but effective distinction between periodic and quasi-periodic motion is possible. 
	
	Fig.~\ref{fig:lyap_period} displays the Lyapunov exponents diagram on the $(\vartheta_0,\varphi_0)$-plane with marked dots for which the motion of the system is periodic.   This figure provides a comprehensive view of the system dynamics, allowing for the identification of chaotic, quasi-periodic, and periodic regions. It complements the analysis, making it exhaustive. For example,
	the phase-parametric diagram visible in Fig.~\ref{fig:lyap_bif}, corresponds to
	Fig.~\ref{fig:lyap_period} with the
	chosen initial
	$\varphi_0=\pi/2$ and $\vartheta\in(-\pi,\pi$). We observe a  very good agreement between these two plots.  Indeed, along the line $\varphi_0=\pi/2$,  we can find in Fig.~\ref{fig:lyap_period} two periodic circles, i.e., at  $\vartheta_0\approx 1.33$ and $\vartheta_0\approx 2.24$, which is suitable with the magnification of the phase parametric diagram visible in Fig.~\ref{fig:lyap_bif}(b). 
	\subsection{Invariant manifold and the Poincar\'e cross-sections}

	\begin{figure}[t!]
	\begin{center}
		\resizebox{85mm}{!}{\begingroup%
		  \makeatletter%
		\providecommand\color[2][]{%
			\errmessage{(Inkscape) Color is used for the text in Inkscape, but the package 'color.sty' is not loaded}%
			\renewcommand\color[2][]{}%
		}%
		\providecommand\transparent[1]{%
			\errmessage{(Inkscape) Transparency is used (non-zero) for the text in Inkscape, but the package 'transparent.sty' is not loaded}%
			\renewcommand\transparent[1]{}%
		}%
		\providecommand\rotatebox[2]{#2}%
		\newcommand*\fsize{\dimexpr\f@size pt\relax}%
		\newcommand*\lineheight[1]{\fontsize{\fsize}{#1\fsize}\selectfont}%
		\ifx\svgwidth\undefined%
		\setlength{\unitlength}{248.83760708bp}%
		\ifx\svgscale\undefined%
		\relax%
		\else%
		\setlength{\unitlength}{\unitlength * \real{\svgscale}}%
		\fi%
		\else%
		\setlength{\unitlength}{\svgwidth}%
		\fi%
		\global\let\svgwidth\undefined%
		\global\let\svgscale\undefined%
		\makeatother%
  \begin{picture}(1,0.61995447)%
    \put(0,0){\includegraphics[width=\unitlength,page=1]{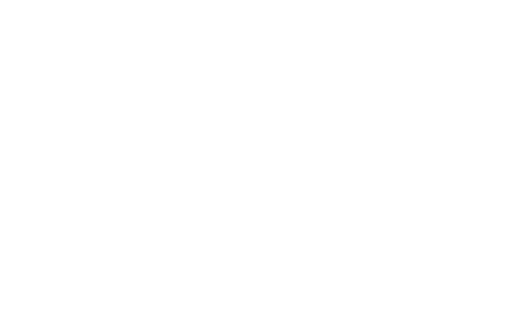}}%
    \put(0.11018461,0.25442396){\color[rgb]{0,0,0}\makebox(0,0)[lt]{\smash{$m_1$}}}%
    \put(0,0){\includegraphics[width=\unitlength,page=2]{obr2.pdf}}%
    \put(0.58617018,0.22599443){\color[rgb]{0,0,0}\makebox(0,0)[lt]{\smash{$m_2$}}}%
    \put(0.92133212,0.1477025){\color[rgb]{0,0,0}\makebox(0,0)[lt]{\smash{$k_1$}}}%
    \put(0.28112126,0.48963447){\color[rgb]{0,0,0}\makebox(0,0)[lt]{\smash{$a$}}}%
    \put(0.35437556,0.38293292){\color[rgb]{0,0,0}\makebox(0,0)[lt]{\smash{$g$}}}%
    \put(0.04883005,0.38526653){\color[rgb]{0,0,0}\makebox(0,0)[lt]{\smash{$\vartheta$}}}%
    \put(0.13026143,0.36943391){\color[rgb]{0,0,0}\makebox(0,0)[lt]{\smash{$\ell$}}}%
    \put(0.57292431,0.35868698){\color[rgb]{0,0,0}\makebox(0,0)[lt]{\smash{$\ell$}}}%
    \put(0.01597875,0.43860272){\color[rgb]{0,0,0}\makebox(0,0)[lt]{\smash{$x$}}}%
    \put(0.06544636,0.49129834){\color[rgb]{0,0,0}\makebox(0,0)[lt]{\smash{$y$}}}%
    \put(0,0){\includegraphics[width=\unitlength,page=3]{obr2.pdf}}%
    \put(0.90873989,0.37476143){\color[rgb]{0,0,0}\makebox(0,0)[lt]{\smash{$X$}}}%
    \put(0.94628668,0.27157102){\color[rgb]{0,0,0}\makebox(0,0)[lt]{\smash{$M$}}}%
    \put(0,0){\includegraphics[width=\unitlength,page=4]{obr2.pdf}}%
			\end{picture}%
			\endgroup}
\caption{ (Color online) Geometry of the variable-length coupled pendulums moving in
				the gravitational and Hooke’s potentials. Here $M$ and $m_1,m_2$ are the masses linked by an inextensible string of lengths $l_1=l+a$ (red) and  $l_2=l$ (green). Masses $m_1$ and $m_2$ can swing, whereas $M$ is constrained to move solely in the vertical direction. The Hamiltonian function that describes the model is defined in~\eqref{eq:hamiltonian}. \label{fig:1}}
	\end{center}
	\end{figure}
	The Poincar\'e cross-section method is an essential tool in qualitative dynamics analysis, especially for Hamiltonian systems of two degrees of freedom. This method is based on intersections of phase curves with a properly chosen surface of the section in a three-dimensional hypersurface defined by a constant energy level. As a result, we obtain a pattern on the section plane, which is easy to visualize and interpret~\cite{Stachowiak:15::,Szuminski:22::}.

Because the considered model is the Hamiltonian system of three degrees of freedom it is complicated to deduce any useful information from the Poincar\'e sections.	However,  if we look closer at equations of motion~\eqref{eq:vh}, we can notice that for $k_2=0$, the system admits an invariant manifold
	\begin{equation}
		\label{eq:invariant0}
		\scM=\left\{(\ell,p_\ell,\vartheta,p_\vartheta,\varphi,p_\varphi)\in \R^6\, \big{|} \, \varphi=p_\varphi=0\right\}.
	\end{equation}
	\begin{figure*}
	\centering \subfigure[{\,$E=1.63$, chaotic central part of the section. }]
	{
		\includegraphics[width=.85	\linewidth]{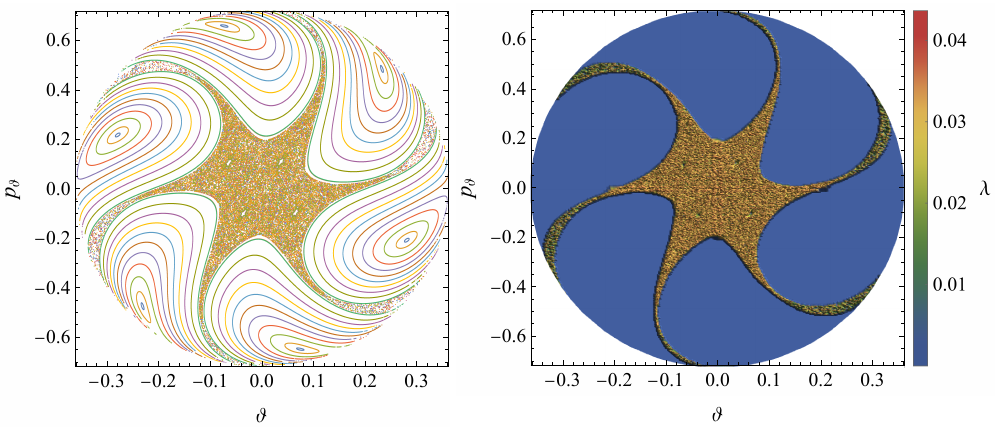}	}
	\subfigure[{\, $E=1.7$, the rise of regular islands between chaotic layers. }]
	{
		\includegraphics[width=.85 	\linewidth]{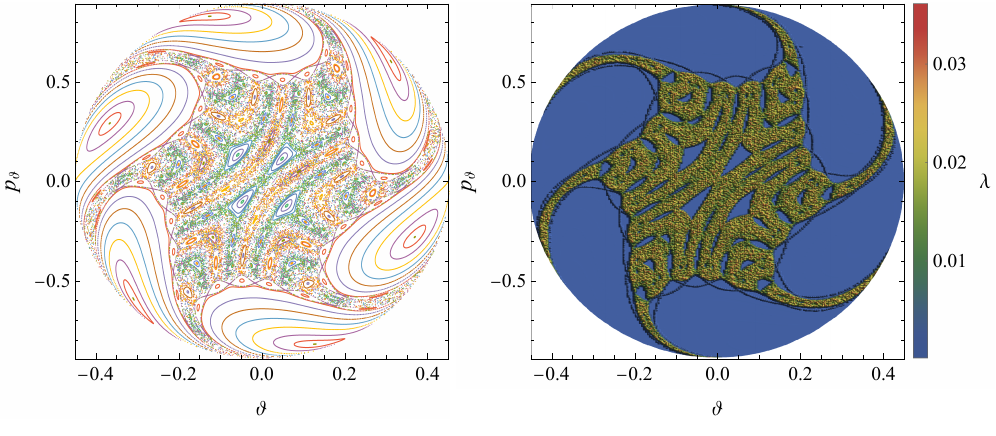}	}
	
	\caption{(Colour online) The Poincar\'e section of system~\eqref{eq:vha} and their corresponding Lyapunov diagrams made for $M=4,\, m_1=2,\, m_2=1,\,  k_1=1,\,  g=1$, with gradually increasing values of energy $E$. The cross-section plane was specified as $\ell=1$ with direction $p_\ell>0$. Each color at the Poincar\'e plane corresponds to distinct initial conditions, while in the Lyapunov diagram, the color scale is proportional to the magnitude of~$\lambda$. }
	\label{fig:poin1}
\end{figure*}

	Hamiltonian~\eqref{eq:hamiltonian},  constrained to manifold $\scM$, reduces to a system of two degrees of freedom, as illustrated in Fig.~\ref{fig:2}. Therefore, under the initial condition $\varphi_0=\dot \varphi_0=0$, the dynamics of the original Hamiltonian vector field~\eqref{eq:vh} is equivalent to the following system:
	\begin{equation}
		\begin{cases}
			\label{eq:vha}
			\dot \ell=\dfrac{p_\ell}{M+m_1+m_2},\\[0.3cm]
			\dot p_\ell=\dfrac{p_\vartheta^2}{m_1\ell^3}-g(M-m_2-m_1\cos\vartheta)
			-k_1\ell,
			\\[0.3cm] \dot\vartheta=\dfrac{p_\vartheta}{m_1\ell^2},\\[0.2cm] \dot p_{\vartheta}=-m_1 g \ell \sin\vartheta,
		\end{cases}
	\end{equation}
	with the Hamiltonian first integral
	\begin{equation}
		\label{eq:hamiltonian_red}
		\begin{split}
			&	\widetilde{H}=\dfrac{1}{2}\left(\dfrac{p_\ell^2}{M+m_1+m_2}+\dfrac{p_\vartheta^2}{m_1\ell^2}\right)
			\\ & \hspace{2cm}+g\ell(M-m_2-m_1\cos\vartheta)+\frac{1}{2}k_1 \ell^2.\end{split}
	\end{equation}
	As the evolution of the reduced system takes place in four-dimensional phase space, the Poincar\'e sections method can be effectively adopted.    
	The main idea of the Poincar\'e cross-sections is very simple. We consider a three-dimensional surface (in our case  $\ell_0=1$) in the phase space which is traversed by all trajectories,  together with the energy a constant energy-level $\mathcal{M}_E=\{\widetilde{H}(\ell_0,p_\ell,\vartheta,p_\vartheta)=E\} $  which is also three dimensional. In general, set $\mathcal{M}_E$ is not connected, that is it consists of several separated parts. In the considered case it has two connected components $\mathcal{M}_E^+$ and $\mathcal{M}_E^-$.  
 They are distinguished in the following ways. 
 We choose $(\ell,\vartheta,p_\vartheta)$ as coordinates on  the level $\mathcal{M}_E$. For a given point $(\ell,\vartheta,p_\vartheta)\in\mathcal{M}_E$, 
we have two choices $p_{\ell\pm}=p_{\ell\pm}(E,\ell,\vartheta,p_\vartheta)$,
which correspond to components $\mathcal{M}_E^\pm$. The cross-section plane $\ell=\ell_0$ cuts both components $\mathcal{M}_E^\pm$. This cut is two-dimensional, and we take $(\vartheta,p_\vartheta)$ as coordinates on it. In figures, we present the part of this cut contained in the component $\mathcal{M}_E^+$. As a result, we obtain a pattern in the plane, which is easy to visualize and interpret. In summary, if the motion is periodic, the trajectory passes through the plane only in a finite number of intersections. If the motion is quasi-periodic a single orbit fills densely a finite number of continuous loops. A chaotic trajectory intersects the plane in scattered, random-looking points. 
	\begin{figure*}
		\centering
		\subfigure[{\, $E=2.07$, the beauty of  the coexistence of periodic, quasi-periodic and chaotic orbits. }]
		{
			\includegraphics[width=.85	\linewidth]{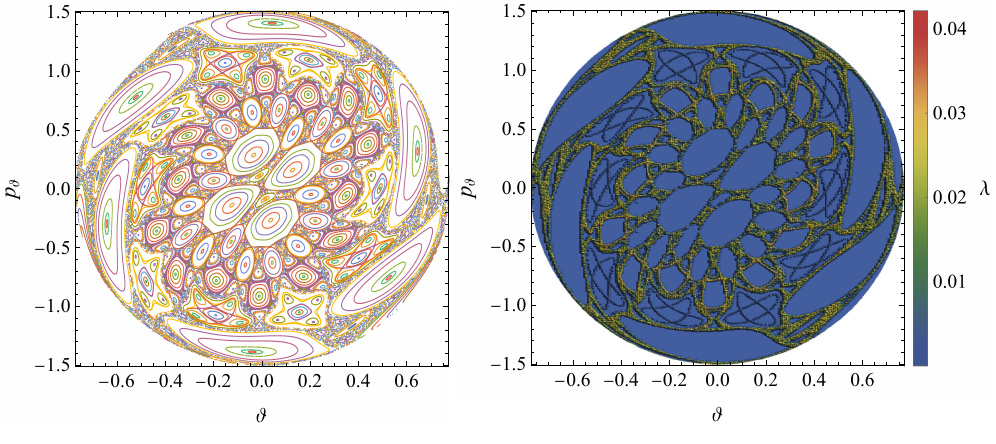}	}
		\subfigure[{\, $E=3.82$, regular and chaotic orbits for higher values of the energy. }]
		{
			\includegraphics[width=.85	\linewidth]{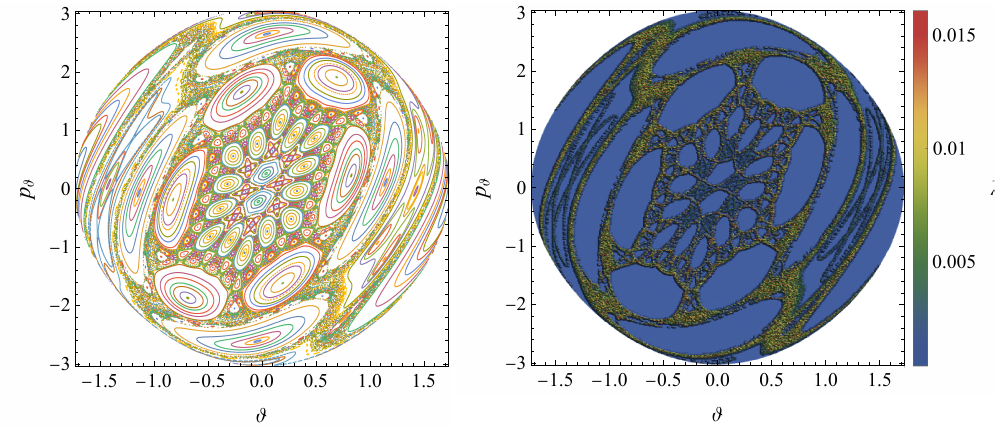}	}
		\caption{(Color online) The Poincar\'e sections of system~\eqref{eq:vha} and their corresponding Lyapunov diagrams made for $M=4,\,  m_1=2,\, m_2=1,\,  k_1=1,\, g=1$, with gradually increasing values of energy $E$. The cross-section plane was specified as $\ell=1$ with direction $p_\ell>0$.  Each color at the Poincar\'e plane corresponds to distinct initial conditions, while in the Lyapunov diagram, the color scale is proportional to the magnitude of $\lambda$.  The plots indicate the beautiful coexistence of periodic, quasi-periodic, and chaotic orbits of the system. }
		\label{fig:poin2}
	\end{figure*}
	
	\begin{figure*}[t]
		\centering
		\includegraphics[width=.43\linewidth]{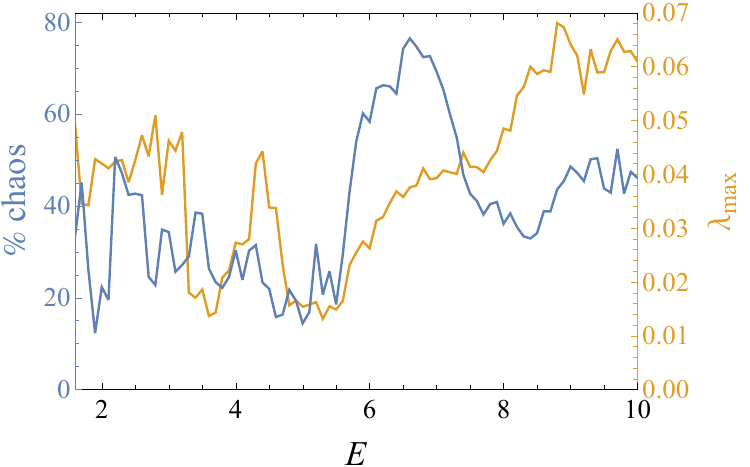}
		\includegraphics[width=.43\linewidth]{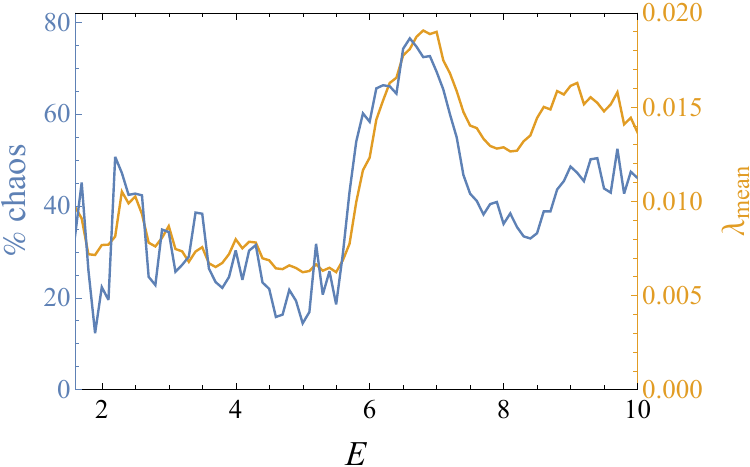}
		\caption{(Color online) The percentage of chaos versus maximal (left) and mean~(right) values of the largest Lyapunov exponent $\lambda$ in phase space as a function of energy.}
		\label{fig:probability}
	\end{figure*}
	Figs.~\ref{fig:poin1}-\ref{fig:poin2} depict the Poincar\'e sections and their corresponding two parameter Lyapunov diagrams of the system~\eqref{eq:vha}, constructed for constant parameters:
	\begin{eqnarray}
		M=4,\quad m_1=2,\quad m_2=1, \quad k_1=1, \quad g=1.
	\end{eqnarray}
	for gradually increased values of the energy. Each color in the Poincar\'e sections corresponds to a distinct initial condition, while in the Lyapunov exponents diagrams, the color scale is proportional to the values of the largest exponent $\lambda$. At first sight, we observe a  very good correlation between the Poincar\'e sections and their corresponding Lyapunov exponents diagrams.  The areas filled by scattered points in the Poincar\'e plane have non-zero values of~$\lambda$, which allows us to quantify the presence of chaos in the system.

	In Fig.~\ref{fig:poin1}(a), we present the first pair of the Poincar\'e section and the Lyapunov exponents diagram computed for the same value of the energy $E=1.63$.   In the central part of the Poincar\'e plane, we observe the prominent signs of chaotic behavior of the system manifested as scattered, random-looking points.
This observation is somewhat unusual because,  for Hamiltonian systems with energies close to energy minimum,  the Poincar\'e sections typically show regular patterns with shapely-elegant quasi-periodic loops, as illustrated in~\cite{Stachowiak:06::,Szuminski:22::,Szuminski:20::,Szuminski:20b::}.  In our case, however, the chaotic region splits the Poincar\'e section into six smaller areas, each with a particular periodic solution bounded by quasi-periodic orbits.

As the energy value increases, the behavior of the system deviates significantly from what is typically observed in Hamiltonian systems. Fig.~\ref{fig:poin1}(b) shows the Poincar\'e section for a slightly higher value of the energy, for  $E=1.7$. We can notice that instead of decaying the successive invariant tori to prone the system to be more ergodic, we observe the appearance of new stable periodic solutions in the central part of the plane. This is somehow even more evident when we further increase the values of the energy. The Poincar\'e section, visible in Fig.~\ref{fig:poin2}, presents the beautiful coexistence of periodic, quasi-periodic and chaotic orbits. Indeed, almost the entire figure is covered by neckless formations corresponding to periodic motion.  Moreover,  while not being visible via the Poincar\'e section, we can observe in the Lyapunov exponents diagrams the chaotic folds. This is attributed to the computation of Lyapunov exponents for 
	$500\times 500$ values of $(\vartheta_0,\varphi_0)$ allowing us to encounter initial conditions responsible for "weak" chaotic orbits. Thus, the Lyapunov exponents diagram serves as a complementary tool to the Poincar\'e sections, providing insights into chaotic dynamics that may not be apparent from the latter, especially when constructed for a much smaller grid of initial conditions for practical and visual reasons.	
	
Lyapunov exponents can be also used as an estimator for the percentage area of the Poincar\'e plane occupied by chaotic motion. Indeed, for a large number of initial conditions uniformly distributed in the available area of the Poincar\'e plane, we repeatedly compute the largest Lyapunov exponent. Then, we take the ratio of the number of points with a Lyapunov exponent different from zero (typically larger than $\lambda_\text{min}=0.002)$ to the total amount of points in the sample. We repeat the entire procedure by increasing the values of the energy $E$.
Similarly,  we can calculate the maximum and mean values of the largest Lyapunov exponent for a given energy.  The results of these computations are presented in Fig.~\ref{fig:probability}. These plots confirm the earlier observation during the description of the Poincar\'e sections that there is no typical transition from regular, almost integrable dynamics at low energies to ergodic dynamics at higher energy values. Instead, we observe alternating increases and decreases in the percentage area of chaos as a function of energy.  Unexpectedly, for very high values of  $E$, the percentage of the area responsible for regular orbits at the Poincar\'e plane remains prominent.  The average value of the Lyapunov exponent $\lambda_\text{mean}$ exhibits similar behavior, while the maximal value of the Lyapunov exponent $\lambda_\text{max}$ increases proportionally with the energy.

\section{The nonintegrability proof \label{sec:2}} The performed numerical analysis
 reveals complex and mostly hyperchaotic system dynamics. However, such analysis was made for fixed values of the
parameters. 
For other parameter sets, the results can be completely different, and, in some cases, the system can possess a first integral and even be integrable, precluding its chaotic behavior. It is, therefore, natural to perform a comprehensive integrability analysis of the Hamiltonian~\eqref{eq:hamiltonian}. For this purpose, we employ the Morales--Ramis theory~\cite{Morales:99::}. This theory is based on the analysis of the differential Galois group of variational equations obtained by the linearization of equations of motion along a certain particular solution.  The main theorem of this theory states that if the Hamiltonian system is integrable in the Liouville sense, then the identity component of the differential Galois group of variational equations must be Abelian.   For a more instructive introduction to this theory, interested readers can refer to~\cite{Morales:99::,Morales:00::,Morales:01::}.
	
Below we formulate the main theorem of this paper.
\begin{theorem}
		\label{th:main}
Let $M, m_1, m_2$ and $a$ are positive parameters and $g\neq 0$. If  the
variable length coupled pendulum system governed by
Hamiltonian~\eqref{eq:hamiltonian} is integrable in a class of functions meromorphic in coordinates and momenta, then
		\begin{eqnarray}
			\label{eq:warunki}
			k_1= k_2= 0,\quad \text{and}\quad \frac{M}{m_1+m_2}=1+\frac{4}{p^2+p-4},
		\end{eqnarray}
		for $p\in \N,\ p\geq 2$.
	\end{theorem}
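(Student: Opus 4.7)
The proof proceeds via the Morales--Ramis theorem recalled above: if Hamiltonian~\eqref{eq:hamiltonian} is Liouville integrable by meromorphic first integrals then, along any particular solution, the identity component $G^0$ of the differential Galois group of the normal variational equation (NVE) must be abelian. The plan is to choose a parameter-independent particular solution, reduce the associated NVE to a second-order equation of classical type, and use Kimura's theorem to isolate the discrete family of parameter values for which $G^0$ can be abelian.

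\emph{Invariant manifold and NVE.} A direct inspection of~\eqref{eq:vh} shows that
\[
\scN = \{\vartheta = p_\vartheta = \varphi = p_\varphi = 0\}
\]
is invariant for every admissible parameter choice: on $\scN$ we have $d = a$, which kills all $k_2$-terms, while $\sin\vartheta = \sin\varphi = 0$ removes the gravitational torques. The restricted dynamics is $(M+m_1+m_2)\ddot\ell = -g(M-m_1-m_2) - k_1\ell$. Expanding $H$ to second order in $(\vartheta,\varphi)$ and using $d-a = \ell(\varphi-\vartheta) + O(\|(\vartheta,\varphi)\|^3)$, the NVE is a linear Hamiltonian system in $(\delta\vartheta,\delta p_\vartheta,\delta\varphi,\delta p_\varphi)$ with diagonal stiffness entries $m_i g\ell + k_2\ell^2$ and off-diagonal coupling $-k_2\ell^2$. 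When $k_2 = 0$ it decouples into two copies of the scalar equation $\ddot u + (2\dot\ell/\ell)\dot u + (g/\ell)u = 0$.

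\emph{Ruling out $k_2 \neq 0$ and $k_1 \neq 0$.} For $k_2 \neq 0$, we rewrite the $4$-dimensional NVE as a Fuchsian system with respect to $\ell$ and apply the dimension-four Kovacic-type algorithm of~\cite{Combot:18b::}. The symplectic coupling $-k_2\ell^2$ prevents the existence of any rational invariant isotropic or Lagrangian plane, so ruling out the finite list of remaining reducibility patterns forces $G^0$ to be non-abelian. If $k_2 = 0$ but $k_1 \neq 0$, the base solution $\ell(t)$ is a shifted harmonic oscillator; passing to $\ell$ as independent variable introduces additional movable singularities in the decoupled scalar NVE, and the standard Kovacic algorithm successively eliminates each of its four cases, yielding $G^0 = \mathrm{SL}(2,\C)$, which is non-abelian.

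\emph{Hypergeometric reduction and Kimura's theorem when $k_1=k_2=0$.} With $A = M+m_1+m_2$ and $B = g(M-m_1-m_2)$, the base solution satisfies $\ddot\ell = -B/A$ and $\dot\ell^2 = 2(E-B\ell)/A$, and the scalar NVE becomes
\[
\ell\,\dot\ell^2\, u'' + (\ell\ddot\ell + 2\dot\ell^2)u' + g\, u = 0.
\]
The substitution $z = B\ell/E$ (assuming $M \neq m_1+m_2$) transforms this into the Gauss hypergeometric equation
\[
z(1-z)u_{zz} + \bigl(2 - \tfrac{5}{2}z\bigr)u_z + \tfrac{\mu_0}{2}\,u = 0, \qquad \mu_0 := \frac{M+m_1+m_2}{M-m_1-m_2},
\]
whose exponent differences at $0,1,\infty$ are $\lambda=1$, $\mu_K=1/2$, $\nu = \sqrt{9+8\mu_0}/2$. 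Kimura's theorem (Appendix~\ref{sec:5}) shows $G^0$ is abelian only if one of $\pm\lambda\pm\mu_K\pm\nu$ is an odd integer or $(\lambda,\mu_K,\nu)$ lies in Schwarz's list. The first alternative reduces to $\nu = p + 1/2$ for some $p \in \N$, which after squaring gives $\mu_0 = (p-1)(p+2)/2$; solving for the mass ratio yields $M/(m_1+m_2) = 1 + 4/(p^2+p-4)$, with positivity of $M$ excluding $p = 0, 1$. A finite inspection of the Schwarz entries compatible with the fixed values $\lambda=1$, $\mu_K=1/2$ produces no further admissible triple, and the degenerate subcase $M = m_1+m_2$ is handled separately via Kovacic applied to the confluent NVE. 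The technically most delicate step is the $k_2 \neq 0$ analysis, since the degree-$4$ algorithm must dispatch several non-trivial reducibility patterns of the symplectic Galois group before concluding that $G^0$ cannot be abelian, in contrast to the clean reductions to Kovacic and Kimura in the remaining cases.
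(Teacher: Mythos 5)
Your overall architecture (restriction to the invariant manifold $\scN$, normal variational equations, Kovacic for $k_1\neq 0$, hypergeometric reduction plus Kimura for $k_1=k_2=0$) matches the paper, and your Kimura computation is correct: the exponent differences $\{1,\tfrac12,\sqrt{9+8\mu_0}/2\}$ agree with the paper's $\{e_1,e_2,e_3\}$ up to the substitution $z\mapsto 1-z$, and solving $\nu\in\tfrac12+\Z$ does give $M/(m_1+m_2)=1+4/(p^2+p-4)$ with $p\geq 2$. The genuine gap is in your treatment of $k_2\neq 0$. Your central claim there --- that the coupling $-k_2\ell^2$ ``prevents the existence of any rational invariant isotropic or Lagrangian plane'' --- is false. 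Writing the normal variational equations as second-order equations for $(\Theta,\Phi)$, the combinations $u=\Theta-\Phi$ and $v=m_1\Theta+m_2\Phi$ decouple exactly:
\begin{equation*}
\ddot v+2\frac{\dot\ell}{\ell}\dot v+\frac{g}{\ell}v=0,\qquad
\ddot u+2\frac{\dot\ell}{\ell}\dot u+\Bigl(\frac{g}{\ell}+\omega_2^2\Bigr)u=0,\qquad \omega_2^2=\frac{(m_1+m_2)k_2}{m_1m_2},
\end{equation*}
because both pendulums have the same length $\ell$ and hence the same gravitational term $g/\ell$. So the fourth-order system is completely reducible for \emph{every} $k_2$; this is precisely the factorization $\mathrm{LCLM}(\mathscr{L}_1,\mathscr{L}_2)$ with commuting operators $\mathscr{L}_2=\mathscr{L}_1+\omega_2^2$ that the paper proves as a lemma and then exploits. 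Moreover, even if irreducibility held, ``no invariant Lagrangian plane, hence $G^0$ non-abelian'' is not a valid inference: irreducible subgroups of $\mathrm{Sp}(4,\C)$ can still have abelian (e.g.\ trivial or toric) identity component, so the dimension-four algorithm would still have to exclude the finite and imprimitive cases by explicit computation, which you do not supply.

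The paper's route avoids all of this: since the two factors commute, it suffices to show that \emph{one} of $\mathscr{L}_1,\mathscr{L}_2$ has non-abelian $G^0$. For $k_1\neq 0$ (any $k_2$) it runs the two-dimensional Kovacic algorithm on the $k_2$-independent operator $\mathscr{L}_1$, which rationalizes to a generalized Heun equation with singularities $\{0,1,\alpha,\infty\}$; all three solvable cases fail unless $g=0$, forcing $k_1=0$. For $k_1=0$ it applies Kimura to $\mathscr{L}_1$ (yielding the mass-ratio condition) and Kovacic to $\mathscr{L}_2$, a confluent Heun operator with an irregular singularity at infinity whose pole orders kill cases 1--3 immediately, forcing $k_2=0$. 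To repair your argument you should replace the dimension-four paragraph by the normal-mode decoupling above and then analyze the scalar operator for $u$ (or $v$) with the appropriate base solution; as written, the two sub-cases $k_2\neq 0$ (with $k_1$ arbitrary) are not actually proved. Your separate handling of the degenerate value $M=m_1+m_2$ is a reasonable addition that the paper passes over silently.
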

	
	\begin{proof}
		System~\eqref{eq:vh} possesses the following invariant manifold
		\begin{equation}
			\label{eq:invariant}
			\scN=\left\{\left(\ell,p_\ell,\vartheta,p_\vartheta,\varphi,p_\varphi\right)\in {\C}^6\,\big{|}\, \vartheta=\varphi=0=p_\vartheta=p_\varphi\right\}.
		\end{equation}
		Restricting the right-hand sides of~\eqref{eq:vh} to $	\scN$, we obtain a Hamiltonian system of one degree of freedom
		\begin{equation}
			\label{eq:H0_vh0}
			\dot \ell=\dfrac{p_\ell}{M+m_1+m_2}, \quad \dot p_\ell=-g(M-m_1-m_2)-k_1 \ell,
		\end{equation}
		with the Hamiltonian
		\begin{equation}
			H=\frac{p_\ell^2}{2(M+m_1+m_2)}+g(M-m_1-m_2)\ell+\frac{1}{2}k_1 \ell^2.
		\end{equation}
		Eqs.~\eqref{eq:H0_vh0} can be rewritten as a one-second-order non-homogeneous Newton's equation
		\begin{equation}
			\label{eq:vh02}
			\ddot \ell+\left(\frac{k_1}{M+m_1+m_2}\right)\ell=g\left(1-\frac{2M}{M+m_1+m_2}\right).
		\end{equation}
 Hence, it can be easily solved using the simple shift in the variable. We have a whole family of particular solutions defined by the energy first integral
		\begin{equation}
			\label{eq:calka_energi}
			E=\left(\frac{M+m_1+m_2}{2}\right)\dot \ell^2+g(M-m_1-m_2)\ell+\frac{1}{2}k_1 \ell^2.
		\end{equation}
		Solving Eq.~\eqref{eq:vh02} and taking into the account the energy first integral~\eqref{eq:calka_energi}, we get the particular solution  $\vvarphi(t)=(\ell(t),p_\ell(t),0,0,0,0)$. The form of $\ell(t)$ depends on whether the spring $k_1$ is equal to zero or not.

		 For  $k_1\neq 0$, equation~\eqref{eq:vh02} is the second-order differential equation with the harmonic oscillator-like solution		
		\begin{equation}
			\begin{split}
				\label{eq:rozw}
				\ell(t)=A \cos\left[ \omega_1(t-t_0)\right]+\delta,
			\end{split}
		\end{equation}
		where $A$ is an amplitude of oscillations, while  $\omega_1$ is a natural frequency, defined by
		\begin{equation}
			\label{eq:parametry1}\begin{split}
				&A=\sqrt{\frac{2E}{k_1}+\delta^2},\quad \omega_1=\sqrt{\frac{k_1}{M+m_1+m_2}},\\
				&\delta=-\frac{g(M-m_1-m_2)}{k_1}.
		\end{split}\end{equation}
		Otherwise,  when $k_1=0$,  equation~\eqref{eq:vh02}  translates to the equation describing the motion of the classical Atwood's machine. Integrating twice, we obtain
		\begin{equation}
			\label{eq:rrr}
			\ell(t)=-\frac{1}{2}at^2+v_0 t+l_0,\quad a=\left(\frac{M - m_1 - m_2}{M + m_1 + m_2}\right)g.
		\end{equation}
		where $v_0, l_0$ are constants of the integrations related to an initial velocity and an initial distance.

		Let $\vX=(L,P_L,\Theta, P_\Theta, \Phi, P_\Phi)^T$ denotes  variations of $\vx=(\ell,p_\ell,\vartheta,p_\vartheta,\varphi,p_\varphi)^T$, then the variational equations of system~\eqref{eq:vh}, along the particular solution $\vvarphi(t)$, are as follows 
		
		\begin{equation}
			\begin{split}
				\dot \vX=\vA\cdot \vX,\quad \text{where}\quad \vA= \pder{\vv_H}{\vx}(\varphi(t)),
			\end{split}
		\end{equation}
		where $\vv_H$ states for the Hamiltonian vector field~\eqref{eq:vh}. The  explicit form of matrix $\vA$ is 
		\begin{equation*}
			\begin{split}
				&\vA=\\ &\begin{pmatrix}
					0&\frac{1}{M+m_1+m_2}&0&0&0&0\\
					-k_1&0&0&0&0&0\\
					0&0&0&\frac{1}{m_1\ell^2}&0&0\\
					0&0&-gm_1 \ell-k_2\ell^2&0& k_2 \ell^2&0\\
					0&0&0&0&0&\frac{1}{m_2 \ell^2}\\
					0&0& k_2\ell^2&0&-g m_2 \ell-k_2 \ell^2&0
				\end{pmatrix}
			\end{split}
		\end{equation*}
		As we can notice, this variational system splits  into two subsystems: the normal variational equations for the variables $(\Theta, 
				P_\Theta,
				\Phi,
				P_\Phi )^{T}$ and tangential equations for $(L,P_L)^{T}$. Since the tangential subsystem is trivially solvable,  for further consideration we take the normal part, which takes the form		\begin{equation}
			\begin{pmatrix}
				\dot \Theta \\
				\dot P_\Theta\\
				\dot \Phi \\
				\dot P_\Phi 
			\end{pmatrix}
			=
			\begin{pmatrix}
				0&\frac{1}{m_1\ell^2}&0&0\\
				-gm_1 \ell-k_2\ell^2&0& k_2 \ell^2&0\\
				0&0&0&\frac{1}{m_2 \ell^2}\\
				k_2\ell^2&0&-g m_2 \ell-k_2 \ell^2&0
			\end{pmatrix}
			\begin{pmatrix}
				\Theta \\
				P_\Theta\\
				\Phi \\
				P_\Phi 
			\end{pmatrix}
		\end{equation}
		This system can be rewritten as a one-fourth-order differential equation for variable~$\Theta$. Its explicit form is given by
		\begin{equation}
			\begin{split}
				\label{eq:fourth}
				&0=\ddddot \Theta(t)+4 \left(\frac{\dot \ell}{\ell}\right) \dddot \Theta(t)+ \left(\frac{2(g+2 \ddot \ell)}{\ell}+\omega_2^2\right) \ddot \Theta(t) +\\
				&2\left(\frac{(g-\ddot \ell)\dot \ell}{\ell^2}+\omega_2^2\frac{\dot \ell}{\ell}+\frac{\dddot \ell}{ \ell}\right)\dot \Theta(t)+g\left(\frac{g-\ddot \ell}{\ell^2}+\frac{\omega_2^2}{ \ell}\right) \Theta(t),\\
			\end{split}
		\end{equation}
		where $\omega_2$ is the reduced frequency given by
		\[\omega_2=\sqrt{\frac{(m_1+m_2)}{m_1m_2}k_2}.\]

Let us remark on this point. The integrability analysis via the differential Galois theory often involves differential systems wherein the normal variational equations can be transformed into an independent subsystem of second-order differential equations. Next, through an appropriate change of independent variables, these equations can be rationalized and their differential Galois groups can be effectively studied employing the Kovacic algorithm~\cite{Kovacic:86::}.
This algorithm  classifies the
possible types solutions of second-order differential equations with rational coefficients.
Unfortunately, there is no equivalent of the Kovacic algorithm for linear
differential equations with rational coefficients of higher orders, although many
partial results are known~\cite{Singer:95::,Ulmer:03::}. Perhaps, the most
compressive results can be found in the recent work~\cite{Combot:18b::}, where the
authors present the equivalent of the Kovacic algorithm for symplectic
differential operators of dimension four.
		
Fortunately, the obtained fourth-order variational equation~\eqref{eq:fourth} exhibits a nice property --- it can be factorized. We state the following.

		\begin{lemma}
			Let us define the differential operators
			\begin{equation}
				\begin{split}
					\label{eq:diff_operator}
						\mathscr{L}_1=D_t^2+2 \left(\frac{\dot \ell}{\ell}\right)D_t+\frac{g}{\ell},\qquad \mathscr{L}_2=\mathscr{L}_1+\omega_2^2,
				\end{split}
			\end{equation}
			where $(D_t=\rmd/\rmd t)$. 
			The differential operators  $\mathscr{L}_1, \mathscr{L}_2$ commute and their actions
			\begin{equation}
				\label{eq:operatory}
				\mathscr{L}_1 [ \mathscr{L}_2 \, \Theta(t)]=0, \quad \text{or}\quad \mathscr{L}_2[\mathscr{L}_1\, \Theta(t)]=0,
			\end{equation}
			coincide with fourth-order differential equation~\eqref{eq:fourth}.
		\end{lemma}
		\begin{proof}
			It is easy to show that operators $\mathscr{L}_1, \mathscr{L}_2$ commute, i.e.,
			\begin{equation}
				\begin{split}
					&\mathscr{L}_1\mathscr{L}_2=\mathscr{L}_1\left[\mathscr{L}_1+\omega_2^2\right]=\mathscr{L}_1\mathscr{L}_1+\omega_2^2\mathscr{L}_1=\mathscr{L}_2\mathscr{L}_1.
				\end{split}
			\end{equation}
		Explicit computations of~\eqref{eq:operatory} are straightforward but lengthy, so we leave them to the interested reader. 
		\end{proof}
		
		Since the fourth-order variational equation~\eqref{eq:fourth} factories, and the operators~\eqref{eq:operatory} commute,  we can, without loss of generality, independently study the differential Galois groups of  $\mathscr{L}_1$ and $L_{2} $~\cite{Combot:18b::}. 
		Moreover, to prove the nonintegrability of our system, it is enough to show that the identity component of the differential Galois group of either of these operators is not Abelian.
We achieve this by employing the classical Kovacic algorithm of dimension two.
		
Given that the Kovacic algorithm was constructed for reduced rational second-order differential equations, we have to perform appropriate changes of variables to differential operators~\eqref{eq:diff_operator}.  In our case,  these changes of variables depend on whether the constant $k_1$ is zero or not. Therefore, we analyze these two cases independently. 
		
		\subsection{Case with $\,k_1\neq 0$}
		We start with the following change of the independent variable 
		\begin{equation}
			\label{eq:rational_change}
			t\longrightarrow z:=1+\frac{\ell(t)}{A-\delta}=\frac{A}{A-\delta}\left(1+ \cos\left[ \omega_1(t-t_0)\right]\right),
		\end{equation}
		Taking into account the transformation rules for derivatives
		\begin{equation}
			\label{eq:pochodne}
			D_t=\dot \ell \, D_z,\quad D_t^2=\ddot \ell\, D_z+\dot \ell^2\,D_z^2,
		\end{equation}
		we perform the rationalization of operators~\eqref{eq:diff_operator}. Their explicit forms are given by
		\begin{equation}
			\begin{split}
				\label{L1L2z}
				\mathscr{L}_1=D_z^2+P(z)\, D_z+Q(z),\qquad \mathscr{L}_2=\mathscr{L}_1-\frac{\omega}{z(z-\alpha)},
			\end{split}
		\end{equation}
		where 
		\begin{equation}
			\begin{split}
				P=\frac{1}{2z}+\frac{2}{z-1}+\frac{1}{2(z-\alpha)},\quad Q=-\frac{\Omega}{z(z-1)(z-\alpha)}
			\end{split}.
		\end{equation}
		The new dimensionless parameters  are defined as
		\begin{equation}
			\label{eq: parametry}
			\Omega:=\frac{g}{(A-\delta)\omega_1^2},\qquad \omega:=\frac{\omega_2^2}{\omega_1^2},\qquad \alpha=\frac{2A}{A-\delta}.
		\end{equation}
		\begin{corollary}
			Rationalization~\eqref{eq:rational_change}  of the fourth-order  variational equation~\eqref{eq:fourth}, give rise to the fourth order differential operator, which is the least common left multiple  (LCLM) of~\eqref{L1L2z}, namely
			\begin{equation}
				\mathscr{L}=\operatorname{LCLM}(\mathscr{L}_1,\mathscr{L}_2)=\operatorname{LCLM}(\mathscr{L}_2,\mathscr{L}_1),
			\end{equation}
		\end{corollary}
		Next, we perform the following change of the dependent variable
		\begin{equation}
			\label{eq:reducedtran}
			\mathscr{L}_{i}[\Theta(z)]=\mathscr{L}_{i}\left[w(z)\exp\left(-\frac{1}{2}\int_{z_0}^zP(z')dz'\right)\right]=\mathscr{D}_{i}[w(z)],
		\end{equation}
		which transform operators~\eqref{L1L2z} into their reduced forms
		\begin{equation}
			\begin{split}
				\label{eq:Heun}
				\mathscr{D}_1=D_z^2-R_1(z),\qquad 
				\mathscr{D}_2=\mathscr{D}_1-R_2(z).
			\end{split}
		\end{equation}
		where \begin{equation}
			\begin{split}
				\label{R1R2}
				&R_1=\frac{12z^3+4(1-5\alpha+4\Omega)z^2+\alpha(5\alpha-4-16\Omega)z+3\alpha^2}{16z^2(z-1)(z-\alpha)^2}, \\
				&R_2=R_1-\frac{\omega}{z(z-\alpha)}.
			\end{split}
		\end{equation} 
To prove the nonintegrability of the system, governed by Hamiltonian~\eqref{eq:hamiltonian}, it is sufficient to show that the identity component of the differential Galois group G of at least one of the operators~\eqref{eq:Heun} is not Abelian. To check these possibilities, we introduce theorems, which describe all possible types of G  and relate them to the forms of solutions of~\eqref{eq:Heun}.   Following Kovacic's approach,  we state
		\begin{theorem}[Kovacic]
			\label{lem:m_1_a}
			Let $\operatorname{G}$ be the differential Galois group of the differential operator
			\begin{equation}
				\label{eq:reduced}
					\mathscr{D}=D_z^2-R(z),\qquad R(z)\in \Q(z).
			\end{equation}
			Then, one of the four cases can occur.
			\begin{enumerate}
				\item $\operatorname{G}$ is conjugate to a subgroup of triangular group
				\[
				\scT=\left\{\begin{pmatrix}
					a&b\\
					0 &a^{-1}
				\end{pmatrix}\vert a\in \C^*, b\in \C\right\}.
				\]
				and equation $\mathscr{D}[w(z)]=0$ has an exponential solution  $w=P\exp[\int \xi]$, $P\in \C[z],\ \xi \in \C(z)$.
				\item $\operatorname{G}$ is conjugated with a subgroup of
				\[
				\scD^\dagger=\left\{\begin{pmatrix}
					c&0\\ 0&c^{-1}
				\end{pmatrix}\vert c\in \C^*\right\} \cup 
				\left\{\begin{pmatrix}
					0&c\\ -c^{-1}&0
				\end{pmatrix}\vert c\in \C^*\right\};
				\]
				in this case equation  $\mathscr{D}[w(z)]=0$ has a solution of the form $w=\exp[\int \xi]$, where $\xi$ is algebraic function of degree $2$.
				\item $\operatorname{G}$ is  finite and all solutions of $\mathscr{D}[w(z)]=0$ are algebraic.
				\item $\operatorname{G}$ is $\operatorname{SL}(2,\C)$ and  equation $\mathscr{D}[w(z)]=0$ has no Liouvillian solution.
			\end{enumerate}
		\end{theorem}
		\begin{remark}
			Let us write $R(z)\in \C(z)$ in the form
			\[
			R(z)=\frac{p(z)}{q(z)},\qquad p(z),q(z)\in \C[z].
			\]
			The roots of $q$ are the poles of $R$. Let \[
			\Sigma=\Sigma'\cup \{\infty\},\qquad \Sigma'=\left\{c\in \C \,| \, q(c)=0\right\}
			\] be the finite set of poles of  $R$ in the complex plane with infinity as well.
			The order of the pole $c\in \Sigma'$, which we denote simply by $\operatorname{o}(c)$,  is the
			multiplicity of  $c$ as a root of $q$, and  the order of  infinity is
			$\operatorname{o}(\infty)=\operatorname{deg}(q)-\operatorname{deg}(p)$.
		\end{remark}
		\begin{theorem}[Kovacic]
			\label{lem_m_1_b}
			The following conditions are necessary for the respective cases given in Theorem~\ref{lem:m_1_a}.
			\begin{enumerate}
				\item Every pole  $c\in \Sigma'$ must have even order or else have order $1$. Moreover, the order $\operatorname{o}(\infty)$  must be even or else greater than $2$.
				\item The set $\Sigma'$ contains  at least one pole $c$ that either has odd order greater than
				$2$ or else has order $2$.
				\item The order $\operatorname{o}(c)\leq 2$  and the order  $\operatorname{o}(\infty)\geq 2$. If the partial fraction expansion of $R$ is
				\[
				R(z)=\sum_i\frac{a_i}{(z-c_i)^2}+\sum_j\frac{b_j}{z-d_j},
				\]
				then $\Delta_i=\sqrt{1+4a_i}\in \Q$ for each $i$, $\sum_j b_j=0$ and if
				\[
				G=\sum_ia_i+\sum_j b_jd_j,
				\]
				then $\sqrt{1+4G}\in \Q$.
			\end{enumerate}
		\end{theorem}
		
		Let us return to our case. The operators~\eqref{eq:Heun} belong to the generalized Heun's family, with four regular singular points located at $\Sigma=\{0,1,\alpha, \infty\}$.   To avoid the confluence of singularities, we assume $\alpha\neq 0$ and $\alpha \neq 1$. 
	Singularities $z=0$ and $z=\alpha$ are poles with orders $\operatorname{o}(0)=2=o(\alpha)$, while $\operatorname{o}(1)=1$.  The degree of infinity is $\operatorname{o}(\infty)=2$.  
		Thus, taking into account the character of these singularities, we can deduce that necessary conditions for all cases given in Theorem~\ref{lem_m_1_b} are satisfied. Hence, according to  Theorem~\ref{lem:m_1_a}, the differential Galois group of $\mathscr{D}_1$ and $\mathscr{D}_2$ can be reducible, finite, dihedral or $\operatorname{SL}(2, \C)$.  To analyze these four distinct cases, we use the Kovacic algorithm.
		
		We start by computing the  Laurent series expansions of  $R_1(z)$ and $R_2(z)$  about the singularities  $c_i \in \Sigma$ with the order $\operatorname{o}(c_i)=2$, i.e., $\{0,\alpha, \infty\}$. The expressions  are as follows
		\begin{enumerate}
			\item around $z=0$
			\[R_1(z)=-\frac{3}{16z^2}+\ldots, \quad  R_2(z)=-\frac{3}{16z^2}+\ldots,
			\]
			\item  around $z=\alpha$ 
			\[R_1(z)=-\frac{3}{16(z-\alpha)^2}+\ldots, \quad R_2(z)=-\frac{3}{16(z-\alpha)^2}+\ldots,
			\]
			\item around $z=\infty$
			\[ 
			R_1(z)=\frac{3}{4z^2}+\ldots,\quad R_2(z)=\frac{3+4\omega}{4z^2}+\ldots .
			\]\end{enumerate}
	The analysis of the differential Galois group of the operator $\mathscr{D}_2$ is considerably more complicated, mainly because the residue of $R_2(z)$ at infinity depends on the value of $\omega$. However, as mentioned earlier, 
		it is sufficient to show that the identity component of the differential Galois group of either $\mathscr{D}_1$ or $\mathscr{D}_2$ is not Abelian. Hence,  we restrict ourselves to the analysis of $\mathscr{D}_1$, due to the simpler characteristic exponent of $R_1(z)$ at infinity.

		\begin{lemma} The differential Galois group of operator $\mathscr{D}_1$~\eqref{eq:Heun} is $\operatorname{SL}(2,\C)$.
		\end{lemma}
		\begin{proof}
			\textbf{Case 1.} By the first case of the  algorithm, for  singularities $c\in\{0,\alpha,\infty\}$ with $\operatorname{o}(c)=2$, we compute 
			\begin{equation}
				\alpha_c^\pm=\frac{1}{2}\pm\frac{1}{2}\sqrt{1+4 a_c},\end{equation}
			where $a_c$ are coefficients of the Laurent series expansions of $R_1(z)$ about  $\{0,\alpha, \infty\}$, i.e.,
			\begin{equation}
				\label{eq:ac}
				a_c=\left\{-\frac{3}{16},-\frac{3}{16},\frac{3}{4}\right\}.
			\end{equation}		
			As the singularity  $z=1$  is the pole with order $\operatorname{o}(1)=1$, we set $\alpha_1^\pm=1$.
		Following the algorithm, we introduce the axially sets $E_c=\{\alpha_c^+,\alpha_c^-\}$,
			\begin{equation}
				E_0=E_\alpha=\left\{ \frac{3}{4},\frac{1}{4}\right\},\quad E_1=\{1,1\},\quad E_\infty=\left\{\frac{3}{2},-\frac{1}{2}\right\},
			\end{equation}
			Next, we calculate  the Cartesian product $E=E_0\times E_1\times E_\alpha\times E_\infty$, and to ensure integrability, we have to consider only those permutations  $e_c=(e_0,e_1,e_\alpha,e_\infty)$ that yield a non-negative integer of
			\begin{equation}
				\label{eq:de}
				d(e)=e_\infty-e_0-e_1-e_\alpha\in\N\cup\{0\}.
			\end{equation}
			In our case, there exists only one distinct element $e_c\in E$ satisfying this condition, namely
			\begin{equation}
				e=\left\{\frac{1}{4},1,\frac{1}{4},\frac{3}{2}\right\},\quad \text{with}\quad d(e_c)=0.
			\end{equation}
			Now we pass to the third step of the Kovacic algorithm.
			We look for a polynomial $P(z)\neq 0$ of degree $d(e)$, such that it is a solution of the following differential equation
			\begin{equation}
				\label{eq:srr}
		P''+2 w P'+(w'+w^2-R_1(z))P=0,
			\end{equation}
	where 	$R_1(z)$ is defined in~\eqref{R1R2} and
			\begin{equation}
				\label{omega}
				w(z)=\sum_{c\in \Sigma'}\frac{e_c}{z-c}=\frac{1}{4z}+\frac{1}{z-1}+\frac{1}{4(z-\alpha)}.
			\end{equation} In the considered case, we have $P=1$, so
Eq.~\eqref{eq:srr} simplifies considerably
			\begin{equation}
				\label{eq:finale}
			w'+w^2=R_1(z),
			\end{equation}
Direct computations show that this equality cannot be satisfied for arbitrary $z$ except $\Omega=0$. However, setting $\Omega$ to zero implies $g=0$, which is in contradiction with our assumption from Theorem~\ref{th:main}. Hence,  we conclude that Kovacic’s algorithm does not find an exponential solution of the form $w(z)=P\exp[\int \xi]$, where $P\in \C[z]$, and $\xi \in \C(z)$. 
			
			\textbf{Case 2.}  For   singularities $c\in\{0,\alpha,\infty\}$ with $\operatorname{o}(c)=2$, we define sets of exponents
			\begin{equation}
				E_c=\{2, 2\pm 2\sqrt{1+4 a_c}\}\cap \Z,
			\end{equation}
			where  the coefficients $a_c$ are~\eqref{eq:ac}. For $z=1$, we have $\operatorname{o}(1)=1$, so we define $E_1=\{4\}$. Hence,  the explicit forms of the auxiliary sets $E_c$,  are given by
			\begin{equation}
				\label{eq:setsE}
				E_0=E_\alpha=\{1,2,3\},\quad E_1=\{4\},\quad E_\infty=\{-2,2,6\}.
			\end{equation}
			Next, we look for elements $e_c=(e_0,e_1,e_\alpha,e_\infty)\in (E_0\times E_1\times E_\alpha\times E_\infty)$, for which 
			\begin{equation}
				d(e)=e_\infty-e_0-e_1-e_\alpha\in \N_{\operatorname{even}}\cup\{0\}.
			\end{equation}
			As it turns out, we have only one element satisfying this condition, namely
			\begin{equation}
				e_c=\{1,4,1,6\},\quad \text{with}\quad d(e_c)=0.
			\end{equation}
			Then, for the above set, we can construct a rational function 
			\begin{equation}
				u(z)=\frac{1}{2}\sum_{c\in \Sigma'}\frac{e_c}{z-c}=\frac{1}{2}\left(\frac{1}{z}+\frac{4}{z-1}+\frac{1}{z-\alpha}\right),\end{equation}
			and we need to find a monic polynomial $P$ od degree $d(e_c)$, such that
			\begin{equation}
\begin{split}
					&P'''+3uP''+(3u^2+3u'-4 R_1(z))P'\\ &+(u''+3 u u'+u^3-4 u R_1(z)-2R_1'(z))=0.
	\end{split}
			\end{equation}
			Since $d(e_c)=0$, we set $P=1$. Thus,  the existence of $P$ translates to  checking whether $u(z)$ satisfies the following differential equation
			\begin{equation}
			(u''+3 u u'+u^3-4 u R_1(z))=2R_1'(z),
			\end{equation}
		where 	$R_1(z)$ is defined in~\eqref{R1R2}
			The above differential equation is fulfilled only if $\Omega=0$, which is in contradiction to our assumption. Hence, the second case of the Kovacic algorithm is not satisfied as well. 
			
			\textbf{Case 3.} In the third case of the algorithm, the auxiliary sets  for $c\in\{0,\alpha,\infty\}$ with $\operatorname{o}(c)=2$,
			are defined as follows
			\begin{equation}
				E_c=\left\{6\pm k\sqrt{1+4a_c} \, |\, k=0, 1,\ldots,  6\right\} \cap \Z,
			\end{equation}
			In this case, for  $c=1$ with $\operatorname{o}(1)=1$,  we have $E_1=\{12\}$. Hence, the explicit forms of the sets $E_c$ with $c\in \Sigma$, are given by
			\begin{equation}
				\label{eq:E3}
				\begin{split}
					&E_0=E_\alpha=\{3, 4, 5, 6, 7, 8, 9\},\quad E_1=\{12\},\\ & E_\infty=\{-6, -4, -2, 0, 2, 4, 6, 8, 10, 12, 14, 16, 18\}.
				\end{split}
			\end{equation}
			Next, we select from the product $E=(E_0\times E_1\times E_\alpha\times E_\infty)$ these elements $e=(e_0,e_1,e_\alpha,e_\infty)$, for which quantity $d(e)$ defined previously in~\eqref{eq:de} is satisfied. As it turns out, among~$637$ combinations there is no element $e$, for which $d(e)$ is a non-negative integer. Thus, the algorithm stops, and there are no solutions in this case as well.

			From the direct application of Kovacic's algorithm, we conclude that the differential operator $\mathscr{D}_1$~\eqref{eq:Heun} is not solvable. Since the three first cases of Theorem~\ref{lem:m_1_a}  do not hold the fourth case is automatically satisfied.  The differential Galois group of the reduced operator $\mathscr{D}_1$~\eqref{eq:Heun}  is $\operatorname{SL}(2, \C)$ with non-Abelian identity component. 
		\end{proof}
		
		\subsection{Case with $k_1=0$}
		For $k_1=0$ the particular solution, along which we compute the variational equations, has the form~\eqref{eq:rrr}. Therefore, in order to rationalize operators~\eqref{eq:diff_operator}, we perform the following linear change of  variable
		\begin{equation}
			\label{eq:rational_change2}
			t\longrightarrow z:=1-\frac{g(M-m_1-m_2)}{h}\ell(t),
		\end{equation}
		After this rationalization operators $\mathscr{L}_1$ and $\mathscr{L}_2$  are as follows
		\begin{equation}
			\label{eq:L1L2rat}
			\mathscr{L}_1=D_z^2+p(z) D_z+q(z),\qquad \mathscr{L}_2=\mathscr{L}_1+\frac{\epsilon}{z},
		\end{equation} with
		\begin{equation}
			p(z)=\frac{1}{2z}+\frac{2}{z-1},\quad q(z)=\frac{1+\lambda}{2(1-\lambda)z(z-1)}.
		\end{equation}
		Here $\lambda$ and $\epsilon$ are dimensionless parameters defined by
		\begin{equation}
			\lambda:=\frac{M}{m_1+m_2},\qquad \epsilon:=\frac{h\, \omega_2^2}{2a^2(M+m_1+m_2)}.
		\end{equation}
		Operator $\mathscr{L}_1$ is the Gauss hypergeometric differential operator with three regular singular points $\{0,1,\infty\}$. Differences of the exponents $e_1,e_2,e_3$ at  the respective  singular points are as follows
		\begin{equation}
			\begin{split}
				\label{eq:wykladniki}
				&e_1=\frac{1}{2},\qquad e_2=1,\qquad e_3=\frac{\sqrt{17\lambda-1}}{2\sqrt{\lambda-1}}.
			\end{split}
		\end{equation}
		If Hamiltonian~\eqref{eq:hamiltonian} for $k_1=0$ is integrable in the sense of Liouville, then the identity component of the differential Galois group of $\mathscr{L}_1$ must be Abelian. So, in particular, it is solvable. The necessary conditions for the solvability of the Gauss hypergeometric differential equation are well known due to Kimura's theorem~\cite{Kimura:69::}, see also the Appendix.   We state the following 
		\begin{lemma} If the differential Galois group of $\mathscr{L}_1$~\eqref{eq:L1L2rat}  has a solvable identity component, then
			\begin{equation}
				\label{eq:warunki2}
				\lambda=\frac{M}{m_1+m_2}=1+\frac{4}{p^2+p-4},\quad p\in \N,\quad p\geq 2.
			\end{equation}
		\end{lemma}
		\begin{proof}
			The proof consists of the direct application of the Kimura theorem to the obtained differential operator $\mathscr{L}_1$.  Due to the fixed differences of the exponents~\eqref{eq:wykladniki}, the calculations are straightforward.
		\end{proof}
		
		As we observe, there is a wide range of values of $\lambda$ for which there is no integrability obstacle. Therefore, we proceed with the analysis of the differential Galois group of the second operator $\mathscr{L}_2$~\eqref{eq:L1L2rat}.  For $h\omega_2\neq 0$, the operator $\mathscr{L}_2$ is the confluent Heun differential operator with two regular singularities $\Sigma'=\{0,1\}$ and one irregular at $\infty$. To employ the Kovacic algorithm,  we must express $\mathscr{L}_2[\Theta(z)]$, using the change of variable~\eqref{eq:reducedtran}, in the reduced form~\eqref{eq:reduced}.  The obtained equation $\mathscr{D}_2[w(z)]=0$, leads to the following differential operator
		\begin{equation}
			\label{eq:DD2}
			\mathscr{D}_2=D_z^2-R(z),\quad R(z)=-\frac{3}{16z^2}+\frac{p(p+1)}{4z(z-1)}-\frac{\epsilon}{z}.
		\end{equation}
		Point  $z=0$ is a pole with $\operatorname{o}(0)=2$, while  $z=1$ has $\operatorname{o}(1)=1$.  The degree of infinity is $\operatorname{o}(\infty)=1$.  Hence, taking into account the characteristic of the exponents at singularities $\Sigma=\{0,1,\infty\}$, we conclude that the differential Galois group of $\mathscr{D}_2$ cannot be reducible or finite because Case 1 and Case 3 of Theorem~\ref{lem_m_1_b} do not hold.  Differential Galois group of $\mathscr{D}_2$ can be only dihedral or
		$\operatorname{SL}(2,\C)$. To check the first possibility we apply the second case of the  Kovacic algorithm. We state the following.
		\begin{figure*}
		\centering
			\includegraphics[width=.85	\linewidth]{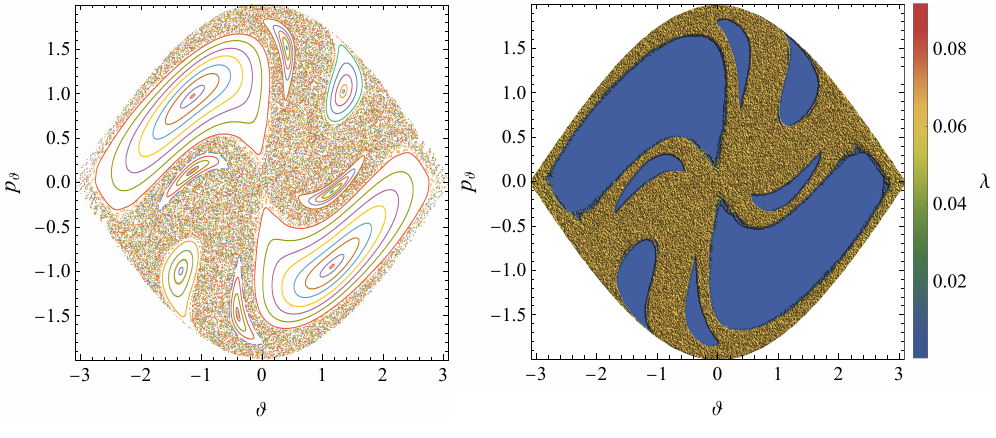}\\
				\includegraphics[width=.85	\linewidth]{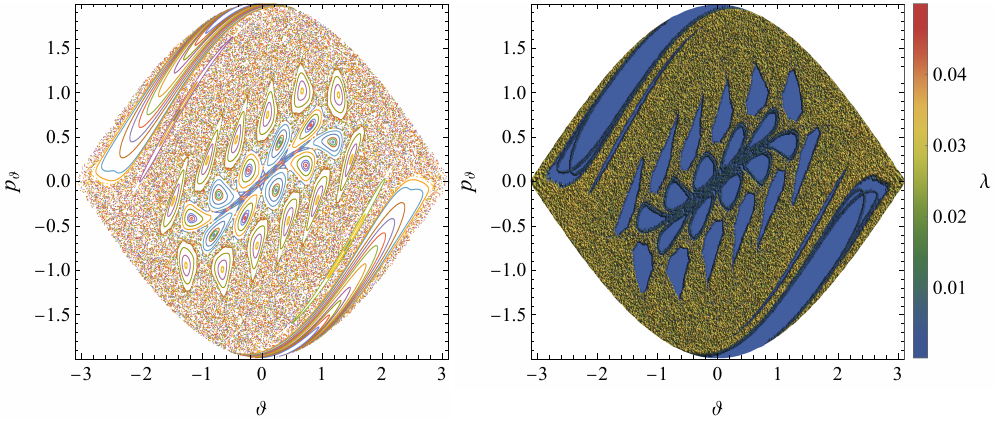}
			\caption{(Color online) The Poincar\'e sections of system~\eqref{eq:vh} restricted to the invariant manifold~\eqref{eq:invariant0}. We put $k_1=k_2=0$, and $m_1=m_2=1$, while $M=6$ and $M=3$ are taken to satisfy the necessary integrability condition~\eqref{eq:warunki}.   The cross-section plane is $\ell=1$ with direction $p_\ell>0$. The energy levels were chosen as  $E=E_0+2$, where $E_0$ is the energy minimum. The plots indicate the chaotic behavior of the system precluding its integrability.}
			\label{fig:poink1k20}
		\end{figure*}
		\begin{lemma} The differential Galois group of operator $\mathscr{D}_2$~\eqref{eq:DD2} is $\operatorname{SL}(2,\C)$.
		\end{lemma}
		\begin{proof}
			Following the second case of the Kovacic algorithm, For the respective singularities $\Sigma=\{0,1,\infty\}$ with degrees $\operatorname{o}(0)=2,\, \operatorname{o}(1)=1$ and $\operatorname{o}(\infty)=1$, we introduce the auxiliary sets
			\begin{equation}
				\label{eq:E1E2E3}
				E_0=\{1,2,3\},\qquad E_1=\{4\},\qquad E_\infty=\{1\}.
			\end{equation}
			Next, we compute the Cartesian product $E=(E_0\times E_1\times E_\infty)$, which gives only three possible combinations
			\begin{equation}
				E=\{\{1, 4, 1\},\{2, 4, 1\}, \{3, 4, 1\}\}.
			\end{equation}
			It is clear that there is no element $e_c=(e_0,e_1,e_\infty)\in E$, for which condition~\eqref{eq:de} holds.   As the set of positive $d(e_c)$ is empty the algorithm has stopped.  Therefore, the second case of the algorithm cannot occur, which implies that only the fourth case is possible, i.e., $\operatorname{G} = \operatorname{SL}(2,\C)$, and operator  $\mathscr{D}_2$~\eqref{eq:DD2}  has no Liouvillian solution.  
					\end{proof}
Based on our analysis,  we conclude that except in the case when both spring constants $k_1,k_2$ are zero, 	the identity component of the differential Galois group of the original fourth-order variational equation~\eqref{eq:fourth} is not Abelian. This implies that the variable-length coupled pendulum system governed by Hamiltonian~\eqref{eq:hamiltonian} is not integrable in a class of functions meromorphic in coordinates and momenta. This ends the proof. 
			\end{proof}		
		
	As we have shown, for $k_1=k_2=0$ and when the condition of the mass ratio~\eqref{eq:warunki2} is fulfilled, the necessary integrability conditions are satisfied and therefore the system is suspected to be integrable.  In Fig.~\ref{fig:poink1k20}, we present the Poincar\'e sections and their corresponding Lyapunov diagrams computed for two exemplary values of the parameters for which condition~\eqref{eq:warunki2} is satisfied. Namely, we put  $m_1=1, m_2=1$, while $M=6$ and $M=3$ are taken form the first elements of the set~\eqref{eq:warunki2}.  The energy levels were chosen as  $E=E_0+2$, where $E_0$ is the energy minimum. As we can observe,   the plots show highly chaotic behavior of the system precluding its integrability. 
Nevertheless, to prove this fact,  the higher-order variational technique has to be used.

	\section{Coupled pendulums without the gravity \label{sec:3}}
	It is a well-known fact that the classical double pendulum, as well as the coupled pendulums, are integrable in the absence of the gravitational potential~\cite{Levien:93::,Szuminski:20::}. Indeed, there is no restoring torque due to gravity, and therefore the total angular momentum is conserved. On the other hand, it was shown that the zero gravity motions of certain types of multiple pendulum systems are still highly nonlinear and chaotic~\cite{Salnikov:06::,mp:13::c,Szuminski:14::}. This is caused by the presence of constraints and  Hooke's interactions within the systems.  As above, it seems reasonable to study the dynamics and integrability of our model in the absence of gravity.

	\subsection{Canonical transformation}
	For $g=0$ and $a=0$, the system possesses  $\field{S}^1$ symmetry.   Hamiltonian function~\eqref{eq:hamiltonian} depends on the difference of angles only. Therefore, to reduce the number of dependent variables, we perform
	the following canonical transformation
	\begin{equation}
		\label{eq:canonical}
		\begin{aligned}
			&L=\sqrt{\frac{m_1m_2}{m_1+m_2}}\ell,&& P_L=\sqrt{\frac{m_1+m_2}{m_1 m_2}}p_\ell,\\
			&\Theta=\vartheta-\varphi,&&P_\Theta=\frac{m_2p_\vartheta-m_1p_\varphi}{m_1+m_2},\\
			&\Phi=\frac{m_1\vartheta+m_2\varphi}{\sqrt{m_1+m_2}},&& P_\Phi=\frac{p_\vartheta+p_\varphi}{\sqrt{m_1+m_2}}.
		\end{aligned}
	\end{equation}
	Making the above transformation and choosing the new time $t\to \omega_2\tau$, the reduced Hamiltonian now reads
	\begin{equation}
		\begin{split}
			\label{eq:Hamiltonian_reduced}
			H_\text{red.}=\frac{1}{2}\left(\frac{P_L^2}{m}+\frac{P_\Theta^2}{L^2}\right)+\frac{f^2}{2L^2}+L^2\left(\frac{k}{2}+1-\cos\Theta\right).\\
		\end{split}
	\end{equation}
	Here $m$ and $k$ are the new positive  and dimensionless parameters defined as \begin{equation}
		\label{eq:pare}
		m:= \frac{(m_1+m_2)(M+m_1+m_2)}{m_1m_2},\qquad k:=\frac{k_1}{k_2}.
	\end{equation}
	while  $f$  represents the value of the cyclic integral associated to the cyclic coordinate $\Phi$, i.e.,
	\begin{equation}
		\label{eq:alka}
		F=P_\Phi=f.
	\end{equation}
	Thanks to the existence of the linear first integral, the original Hamiltonian~\eqref{eq:hamiltonian} reduces to the system of two degrees of freedom~\eqref{eq:Hamiltonian_reduced} with an additional parameter  $f$. Hamilton's equations of motion take the form
	\begin{equation}
		\begin{cases}
			\label{eq:vh_red}
			\dot L= \dfrac{P_L}{m},\qquad
			\dot P_L=\dfrac{P_\Theta^2}{L^3}+\dfrac{f^2}{L^3}-2L\left(\dfrac{k}{2}+1-\cos\Theta\right),\\[0.3cm]
			\dot \Phi=\dfrac{P_\Theta}{L^2},\qquad \dot P_\Phi=-L^2\sin\Theta.
		\end{cases}
	\end{equation}
Since the phase space of the reduced system is four-dimensional, we provide a quick insight into the system's dynamics using
the Poincar\'e section shown in Fig.~\ref{fig:poinred}. It illustrates that, for the
chosen values of parameters, the system is generally not integrable. The plot displays 
a beautiful coexistence of periodic, quasi-periodic, and chaotic orbits. The
corresponding Lyapunov diagram completes the picture by giving a quantitative
description of chaos. It is evident that the strength of chaos varies among
different chaotic orbits.

	\subsection{Variational equations and~nonintegrability}
	Thanks to the above canonical transformation, we can perform the integrability analysis of the reduced model using various particular solutions. We state the following theorem.
	\begin{theorem}
	In the absence of gravity, and for non-zero parameters $m, k$ and $f$,  the variable-length coupled pendulum system governed by the reduced  Hamiltonian~\eqref{eq:Hamiltonian_reduced} is not integrable in a class of functions meromorphic in coordinates and momenta. 
	\end{theorem}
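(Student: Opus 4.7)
The strategy is to apply Morales--Ramis to the reduced Hamiltonian~\eqref{eq:Hamiltonian_reduced}, paralleling the proof of Theorem~\ref{th:main}. The plan has four steps.

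First, I identify the invariant manifold
\[
\scN=\left\{(L,P_L,\Theta,P_\Theta)\in \R^4\,\big|\,\Theta=P_\Theta=0\right\},
\]
which is flow-invariant because $\dot\Theta=P_\Theta/L^2$ and $\dot P_\Theta=-L^2\sin\Theta$ vanish on it. The restricted dynamics is the one-degree-of-freedom \emph{isotonic oscillator}
\[
m\ddot L=\frac{f^2}{L^3}-kL,\qquad \tilde E=\frac{P_L^2}{2m}+\frac{f^2}{2L^2}+\frac{kL^2}{2},
\]
and the key observation is that $z(t):=L(t)^2$ satisfies the linear equation $\ddot z+(4k/m)z=4\tilde E/m$, so $z(t)$ oscillates sinusoidally between the turning points $z_\pm=\tilde E/k\pm\sqrt{\tilde E^2-kf^2}/k$. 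This gives a one-parameter family of non-equilibrium particular solutions $\vvarphi(t)=(L(t),P_L(t),0,0)$ indexed by $\tilde E$.

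Second, linearising~\eqref{eq:vh_red} along $\vvarphi(t)$ yields a system that splits into a trivially solvable tangent block for $(\delta L,\delta P_L)$ and a normal block which, after elimination of $\delta P_\Theta$, collapses to
\[
\ddot{\delta\Theta}+\frac{2\dot L}{L}\dot{\delta\Theta}+\delta\Theta=0.
\]
The change of independent variable $z=L^2$ rationalises this equation (using $2\dot L/L=\dot z/z$ and $\dot z^2=-(4k/m)(z-z_+)(z-z_-)$), producing a Fuchsian equation with four regular singular points $\Sigma=\{0,z_+,z_-,\infty\}$. The standard reduction to normal form, $\delta\Theta=w\exp(-\tfrac12\!\int\!P\,\mathrm dz)$, turns it into $w''=R(z)w$, and a short computation gives the local data: exponents $\{1/2,1/2\}$ at $z=0$, $\{1/4,3/4\}$ at $z=z_\pm$, and $\bigl\{\tfrac12\mp\tfrac12\sqrt{1+m/k}\bigr\}$ at infinity. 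The parameters enter only through the last of these.

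Third, I run Kovacic's algorithm on $w''=R(z)w$, with auxiliary sets $E_0=\{1/2\}$, $E_{z_\pm}=\{1/4,3/4\}$, and $E_\infty$ determined by $m/k$. For every admissible tuple $e=(e_0,e_{z_+},e_{z_-},e_\infty)$ with $d(e)=e_\infty-e_0-e_{z_+}-e_{z_-}\in\N\cup\{0\}$, one has to decide whether the corresponding Riccati-type equation admits a rational/algebraic solution, i.e.\ whether a polynomial of degree $d(e)$ satisfying the Kovacic determining identity exists. The coefficients of that identity are rational in $m,k,f$ and $\tilde E$, and the freedom in $\tilde E$ is decisive: the identity must hold for \emph{every} admissible $\tilde E$, so equating coefficients in $\tilde E$ should force a parameter constraint (typically $f=0$) that contradicts the standing hypotheses. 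Ruling out all three solvable Kovacic cases yields $G=\mathrm{SL}(2,\C)$, whose identity component is non-Abelian, and Morales--Ramis delivers the non-integrability.

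The main obstacle is the parametric dependence of the exponent at infinity on the ratio $m/k$. For generic $(m,k)$, the set $E_\infty$ contains no rationals with the required denominators and all three solvable cases are immediately vacuous, but when $1+m/k$ is a rational square of small denominator the candidate list becomes non-empty and each surviving $e$ has to be eliminated by an explicit computation, with $\tilde E$ kept formal so as not to absorb the contradiction. Should any exceptional family survive, one can either repeat the analysis on the twin invariant manifold $\Theta=\pi,\,P_\Theta=0$, which yields the same type of equation with a shifted effective spring constant, or pass to higher-order variational equations in the style of Morales--Ramis--Sim\'o.
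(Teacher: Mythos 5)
Your setup reproduces the paper's almost verbatim: the invariant manifold $\Theta=P_\Theta=0$, the isotonic-oscillator particular solutions foliated by the energy, the normal variational equation $\ddot Y+(2\dot L/L)\dot Y+Y=0$, the rationalization $z\propto L^2$ (your observation that $z=L^2$ obeys a linear equation is exactly why the paper's substitution works), and the resulting Fuchsian normal form with exponent differences $\Delta_0=0$, $\Delta_{z_\pm}=1/2$, $\Delta_\infty=\sqrt{1+m/k}$ all agree with~\eqref{eq:splitek}--\eqref{eq:delta}. Where you diverge is the endgame, and that is where the gap sits. First, a simplification you miss: because one exponent difference vanishes (at $z=0$), the dihedral and finite cases of Kovacic are excluded outright (the paper cites~\cite{Maciejewski:02::,Stachowiak:15::} for this), so only case~1 needs checking, and there the admissibility condition collapses to $d=\tfrac12(\sqrt{1+m/k}-p)$ with $p=1,2,3$. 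Hence the exceptional set is not ``$1+m/k$ a rational square of small denominator'' but precisely $\sqrt{1+m/k}=n\in\N$; for all other $(m,k)$ the candidate list is empty and you are done without any determining-equation computation.

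The genuine gap is the exceptional family $\sqrt{1+m/k}=n\in\N$, an infinite set of parameter values squarely inside the theorem's hypotheses, which your proposal leaves open: the claim that keeping $\tilde E$ formal ``should force $f=0$'' is a hope rather than an argument, and the two fallbacks you offer (the twin manifold $\Theta=\pi$, or higher-order variational equations) are not carried out or shown to succeed. The paper closes this case by exploiting the energy freedom in the opposite direction: since a meromorphic first integral would force an abelian identity component on \emph{every} energy level, it suffices to exhibit one bad level. Choosing the energy so that the two finite singularities collide ($\epsilon=1$) turns the normal variational equation into a Gauss hypergeometric equation with exponent differences $(\sqrt{n^2-1},\,0,\,n)$, and Kimura's theorem (Theorem~\ref{th:Kimura}) fails because $\sqrt{n^2-1}$ is irrational for every $n\geq 2$ while the second difference is zero. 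To complete your proof you should either adopt this confluence-plus-Kimura argument or actually perform the degree-$d(e)$ polynomial search for the surviving case-1 candidates with $\epsilon$ generic; as written, the argument is incomplete exactly on the parameter set where an obstruction is hardest to produce.
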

	\begin{proof}
		System~\eqref{eq:vh_red} possesses the following invariant manifold
		\begin{equation}
			\label{eq:invariant2}
			\scN=\left\{\left(L,P_L,\Theta,P_\Theta\right)\in {\C}^4\,\big{|}\, \Theta=0=P_\Theta\right\}.
		\end{equation}
		Hamiltonian~\eqref{eq:Hamiltonian_reduced} and its corresponding equations of motion~\eqref{eq:vh_red}, restricted to $\scN$, read as follows
		\begin{equation}\scN:
			\begin{cases}
				H_\text{red.}=\dfrac{1}{2}\left(\dfrac{P_L^2}{m}+\dfrac{f^2}{L^2}+k\,L^2\right),\\[0.2cm]
				\dot L=\dfrac{P_L}{m},\quad \dot P_L=\dfrac{f^2}{L^3}-k\, L,\quad \dot \Theta=0,\quad \dot P_\Theta=0.
			\end{cases}
		\end{equation}
		Hence, solving the above equations, we obtain a family of particular solutions $ \vvarphi(t)=(L(t),P_L(t),0,0)$, foliated by a constant energy level $H_\text{red.}=E$. 
		
		Let $[X,P_X,Y,P_Y]^T$ denotes the variations of $[L,P_L,\Theta, P_\Theta]^t$. The variational equations along the particular solution $\vvarphi(t)$, take the form
		\begin{equation}
			\label{eq:var_red}
			\begin{pmatrix}
				\dot X\\
				\dot P_X\\
				\dot Y\\
				\dot P_Y
			\end{pmatrix}=
			\begin{pmatrix}
				0&\frac{1}{m}&0&0\\
				-k-\frac{3f^2}{L^4}&0&0&0\\
				0&0&0&\frac{1}{L^2}\\
				0&0&-L^2&0
			\end{pmatrix}
			\begin{pmatrix}
				X \\ P_X\\ Y\\ P_Y
			\end{pmatrix}
		\end{equation}
		For further analysis, we consider the normal part $(\dot Y,\dot P_Y)$, which can be written as a one second order differential equation~$\mathscr{L}[Y(t)]=0$, where the differential operator $\mathscr{L}$ is defined as follows
		\begin{equation}
			\label{eq:operL}
			\mathscr{L}=D_t^2+2 \left(\frac{\dot L}{L}\right)D_t+1.
		\end{equation}
		It can be easily checked that at zero level of the first integral~\eqref{eq:alka}, the above differential operator $\mathscr{L}$  is solvable in terms of elementary functions. Therefore, for $f=0$ there is no integrability obstacle. Thus, for further analysis, we assume $f\neq 0$.
		\begin{figure*}[t]
			\centering
			\includegraphics[width=0.85	\linewidth]{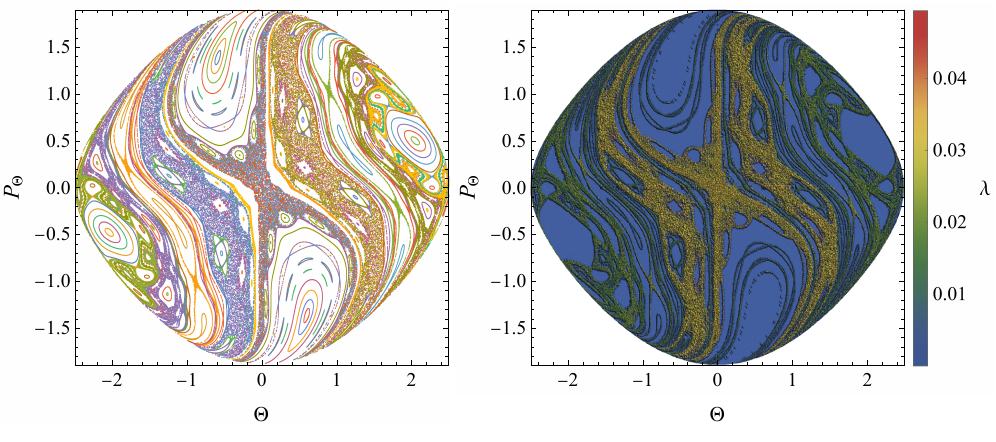}	
			\caption{(Color online)  The Poincar\'e section of the reduced system~\eqref{eq:vh_red} with Hamiltonian~\eqref{eq:Hamiltonian_reduced} and their corresponding Lyapunov diagram. The cross-section plane was specified as $ L=1$ with direction $P_L>0$.   For fixed values of the parameters $m=4, k=1, $ and at the non-zero value of the angular momentum first integral $f=2$, the plot indicates chaotic dynamics. Each color at the Poincar\'e plane corresponds to distinct initial conditions, while in the Lyapunov diagram, the color scale is proportional to the magnitude of the largest Lyapunov exponent $\lambda$. }
			\label{fig:poinred}
		\end{figure*}
	
	Next, we perform the  change of the variable on the equation~\eqref{eq:operL}, namely 
		\begin{equation}
			t\to z=\frac{
				\sqrt{k\,\epsilon}}{f}\, L(t)^2.
		\end{equation}
	The rationalised form of $\mathscr{L}$ is given by
		\begin{equation}
			\label{eq:LZ}
			\mathscr{L}=D_z^2+\frac{1}{2}\left(\frac{2}{z}+\frac{1}{z-1}+\frac{1}{z-\epsilon}\right) D_z-\frac{\omega}{4(z-1)(z-\epsilon)},
		\end{equation}
		where $(\omega,\epsilon)$ are the new parameters defined as follows
		\begin{equation}
			\label{eq:pari}
			\omega:=\frac{m}{k},\qquad \varepsilon:=\frac{2h\left(E+\sqrt{E^2-k\, f^2}\right)}{k\, f^2}-1.
		\end{equation}
		Next, let us apply the classical change of the dependent variable~\eqref{eq:reducedtran}, which  transforms~\eqref{eq:LZ}, into its reduced form
		\begin{eqnarray}
			\label{eq:splitek}
			\mathscr{D}=D_z^2-R(z),
		\end{eqnarray}
		with
		\begin{equation}
			\begin{split}
				\label{eq:rr}
				R(z)&=-\frac{1}{4z^2}-\frac{3}{16(z-1)^2}-\frac{3}{16(z-\epsilon)^2} \\ &+\frac{(5+2\omega)z-2(1+\epsilon)}{8z(z-1)(z-\epsilon)}.
			\end{split}
		\end{equation}
		For $\epsilon \in \R\setminus{\{0,1\}}$ operator $	\mathscr{D}$ has four distinct regular singularities  located at $\Sigma=\{0,1,\epsilon,\infty\}$. The points $\{0,1,\epsilon\}$ are poles of the second order, and the degree of infinity is equal to two.
		
		Following the  Kovacic algorithm, we compute the respective differences of exponents $\Delta_c=\sqrt{1+a_c}$, where $a_c$ are coefficients of the Laurent series expansions of $R(z)$ about $c\in \Sigma$. Their, explicit forms are as follows
		\begin{equation}
			\label{eq:delta}
			\Delta_0=0,\quad \Delta_1=\frac{1}{2},\quad \Delta_\epsilon=\frac{1}{2},\quad \Delta_\infty=\sqrt{\omega+1}.
		\end{equation}
		Now, we can prove the following 
		\begin{lemma}  \label{lem:3}
			Let us assume $\epsilon \in \R\setminus{\{0,1\}}$. Then, for
			\begin{equation}
				\label{eq:condition}
				\sqrt{\omega+1}  \notin \N,
			\end{equation}
			the differential Galois group of operator~\eqref{eq:splitek} with coefficient~\eqref{eq:rr} is $\operatorname{SL(2,\C)}$.
		\end{lemma}
		\begin{proof}
			Taking into account the character of singularities, it appears that all possibilities outlined in Theorem~\ref{lem:m_1_a} must be verified through the Kovacic algorithm. However, according to papers~\cite{Maciejewski:02::,Stachowiak:15::}, if at least one of the differences of exponents is zero, then the differential Galois group cannot be dihedral or finite. 
 Hence, the second and third cases of the algorithm do not hold. The differential Galois group of $	\mathscr{D}$ may either be the triangular group or $\operatorname{SL(2,\C)}$. 
			To check the first possibility, we apply the first case of the Kovacic algorithm. 
			
			Following the algorithm, we compute the auxiliary sets \begin{equation}
				\begin{split}
					&E_0=\left\{\frac{1}{2},\frac{1}{2}\right\},\quad E_1=\left\{\frac{3}{4},\frac{1}{4}\right\},\quad E_\epsilon=\left\{\frac{3}{4},\frac{1}{4}\right\},\\ & E_\infty=\left\{\frac{1}{2}\left(1+\sqrt{\omega+1}\right),\frac{1}{2}\left(1-\sqrt{\omega+1}\right)\right\},
				\end{split}
			\end{equation}
			Next, we check whether there exists families $e_c=(e_0,e_1,r_\epsilon,e_\infty)$  of the Cartesian product $E=E_0\times E_1\times E_\epsilon\times E_\infty$, such that  $d(e)=e_\infty-e_0-e_1-e_\epsilon\in\N_{0}$. We obtain the following distinct possibilities
			\begin{equation}
				d=\frac{1}{2}\left(\sqrt{\omega+1}-p\right)\in \N_{0},\quad \text{for}\quad p=1,2,3.
			\end{equation}
			It is clear that if  condition~\eqref{eq:condition} holds, then  $d\notin \N_{0}$. This ends the proof.
	
		\end{proof} 
		In Lemma~\ref{lem:3} we have  assumed $\epsilon \in \R\setminus{\{0,1\}}$, which indicates that two values of the energy $E$ were excluded. However, if there existed an additional first integral, it would not depend on the energy value, so in particular, it would exist for all generic values of energy. Hence, we were able to safely assume that $\epsilon \neq 0$ and $\epsilon\neq 1$. Nevertheless, there exists a wide range of values of $\omega$ for which  $ \sqrt{\omega+1} =n \in \N$. Hence, we need to analyze the variational equation at these special values of the energy.
		
		We put $\epsilon=1$ and we  assume $\omega=n^{2}-1$. Then, using
		the change of the variable $z\to y=1-z$, we transform~\eqref{eq:LZ} to the Gauss differential operator of the form
		\begin{equation}
			L=D_{y}^{2}+\left(\frac{1}{y}+\frac{1}{y-1}\right)D_{y}-\frac{1-n^{2}}{4y^{2}}.
		\end{equation}
		For the given equation, the respective differences of exponents at singularities $\{0,1,\infty\}$, are as follows		\begin{equation}
			e_{1}=\sqrt{n^{2}-1},\quad e_{2}=0,\quad e_{3}=n.
		\end{equation}
		As $e_{1}$ is irrational for every $n\in \N\setminus\{0,1\}$, and $e_{2}$ is zero, it is evident that neither case of Kimura's theorem (see Appendix) can be satisfied. 
		Hence, from the above analysis, we conclude that the differential Galois group of variational equations~\eqref{eq:var_red} is not Abelian for every  $\omega\in \R^{+}$.  This implies that in the absence of the gravitational potential the coupled pendulum system is not integrable at the non-zero level of the angular momentum cyclic first integral. This concludes the proof.
	\end{proof}
	
	\subsection{Integrability and superintegrability}
			\begin{figure*}
		\centering
			\subfigure[$\,  m = 5,\, k = 1,\, f = 0,\, E=2.25$. Chaotic dynamics]{
			\includegraphics[width=.85 	\linewidth]{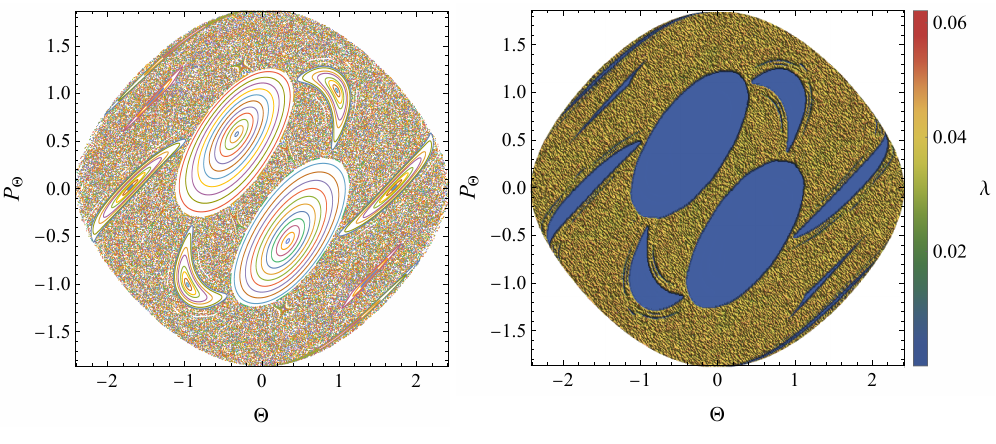}	}
		\subfigure[$\,  m = 4,\, k = 1,\, f = 0,\, E=2.25$. Regular (non-chaotic) dynamics]{
		\includegraphics[width=.85	\linewidth]{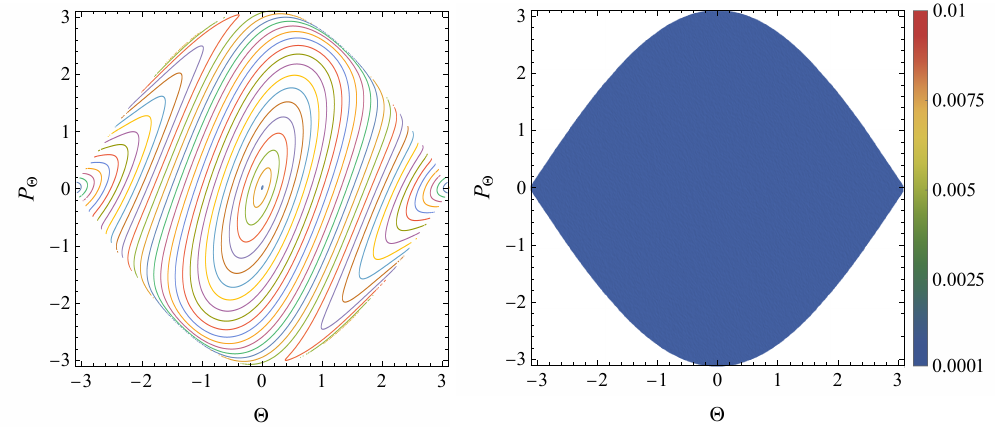}	}
		\caption{(Color online) The Poincar\'e sections of the reduced system~\eqref{eq:vh_red} with Hamiltonian~\eqref{eq:Hamiltonian_reduced} and their corresponding Lyapunov diagram constructed at zero value of the angular momentum first integral $f=0$. The cross-section plane was specified as $L=1$ with direction $P_L>0$.  Each color at the Poincar\'e plane corresponds to distinct initial conditions, while in the Lyapunov diagram, the color scale is proportional to the magnitude of the largest Lyapunov exponent $\lambda$.}
		\label{fig:poinred__int}
	\end{figure*}
In our comprehensive integrability analysis, we have excluded one case, namely,  the zero
value of the cyclic first integral $f=0$.  This exclusion arises from the straightforward solvability of the variational equations. A more detailed analysis (not included)
shows that also the second-order variational equations can be solved in terms of
elementary functions. Hence,  for $f=0$ there are no integrability obstacles.
This observation suggests that the system can indeed be integrable. Since
Hamiltonian~\eqref{eq:Hamiltonian_reduced}  is the system of two degrees of
freedom, it is sufficient to find one additional first integral for its
complete integrability. 
	
To get a quick insight into the dynamics, we generated  the Poincar\'e sections of
the reduced system at zero value of the cyclic first integral $f=0$.
Fig.~\ref{fig:poinred__int} illustrates two exemplary sections. As we can note,
for $m=5, k=1, f=0$, the system is highly chaotic, which precludes its
integrability. Surprisingly, however, by changing the value of the reduced mass
to $m=4$, the dynamic of the system becomes regular. In
Fig.~\ref{fig:poinred__int}(b), we obtain shapely elegant curves with
quasi-periodic orbits. There are no signs of the chaotic nature of the system at
all. Indeed,  the corresponding Lyapunov diagram also indicates the possible presence of
additional first integral since $\lambda\approx 0$, for every initial condition.

The above makes the Lyapunov exponents spectrum a possible indicator for
searching additional first integrals.  Fig.~\ref{fig:indicator} presents
a grid of values of the parameters $(m,k)$, for which we computed Lyapunov
exponents for a  large number of initial conditions uniformly distributed in the available area of the Poincar\'e plane $(\Theta, P_\Theta)$ at
energy levels $E=E_0+2$, where $E_0$ is the energy minimum.
 The color scale of
the dots is proportional to the magnitude of the highest value of the largest Lyapunov exponent at the Poincar\'e plane. Hence, if there is
chaos visible at the Poincar\'e plane, then $\lambda >0 $. Otherwise, if the
Poincar\'e section is regular, without chaotic behavior (as in
Fig.~\ref{fig:poinred__int}(b)), then $\lambda_\text{max}\approx 0$, which
suggest  existence of the first integral. Looking at Fig.~\ref{fig:indicator},
we see that for most values of $(m,k)$ the system is not integrable due to
non-zero values of the largest Lyapunov exponent. However,  for $m=4$    the
situation is quite different. The maximal values of the largest Lyapunov
exponent, $\lambda_\text{max}$, tend to zero for every $k$. The above suggests
the integrability of the system in these cases. 
	
	Indeed,   for $m=4$ and $k$-arbitrary Hamiltonian~\eqref{eq:Hamiltonian_reduced} is  integrable. The additional first integral is a quadratic polynomial with respect to the momenta.
	 The explicit form of the integrable system is as follows 
	\begin{equation}
		\begin{cases}
			\label{eq:calkowalny}
			H=\dfrac{1}{2}\left(\dfrac{P_L^2}{4}+\dfrac{P_\Theta^2}{L^2}\right)+L^2\left(\dfrac{k}{2}+1-\cos\Theta\right),\\[0.2cm]
			F=\left(\dfrac{P_L^2}{4}-\dfrac{P_\Theta^2}{L^2}\right)\cos\Theta-\dfrac{\sin\Theta}{L} P_L\,P_\Theta\\[0.2cm] +L^2\left(k\cos\Theta-4\sin^{2}\dfrac{\Theta}{2}\right).
		\end{cases}
	\end{equation}	 
	\begin{figure}
		\centering
		\includegraphics[width=0.75\linewidth]{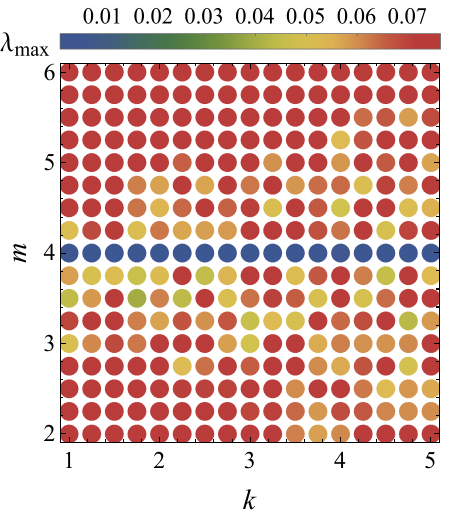}\\
		\caption{(Color online)  Indicators of the first integral. For each value of the parameters $(m,k)$ with $f=0$, Lyapunov diagrams of system~\eqref{eq:vh_red} are computed. The calculations are performed over a $(500\times 500)$ grid of initial conditions $(\Theta_0, P_{\Theta 0})$ with $L_0=1$ at energy levels $E=E_0+2$, where $E_0$ is the energy minimum. The color scale of the marked points at  $(m,k)$-plane corresponds to the highest value of the largest Lyapunov exponent $\lambda$.  The plot indicates that for $m=4$, the system is suspected to be integrable, as  $\lambda_\text{max}\approx 0$. 
		\label{fig:indicator}}
	\end{figure}
		\begin{figure}[htp]
		\centering
			\includegraphics[width=.8	\linewidth]{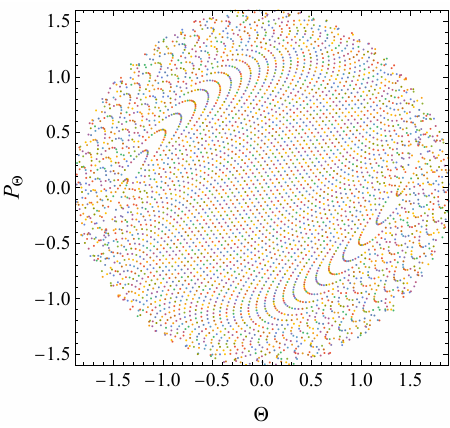}	
		\caption{(Color online) The Poincar\'e section of the reduced system~\eqref{eq:vh_red} constructed for constant values of the parameters: $m=4,\, k=1.5,\, f=0$, at constant energy level $E=2$. The cross-section plane was specified as $L=1$ with direction $P_L>0$.  The plot shows regular dynamics with marked points corresponding to the periodic motion of the system. As there are no quasi-periodic loops (all orbits are closed), the superintgerability of the system is suspected.}
		\label{fig:poinred_super_int}
	\end{figure}
	
			\begin{figure*}
		\centering
		\includegraphics[width=.24 	\linewidth]{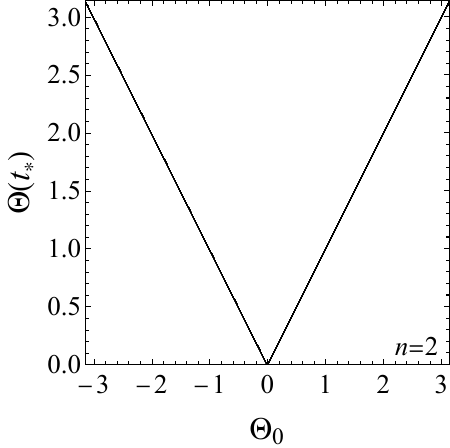}	 \hspace{-.1cm}
		\includegraphics[width=.24 	\linewidth]{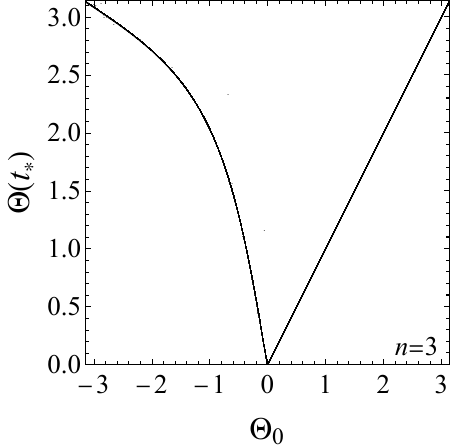}	 \hspace{-.1cm}
		\includegraphics[width=.24 	\linewidth]{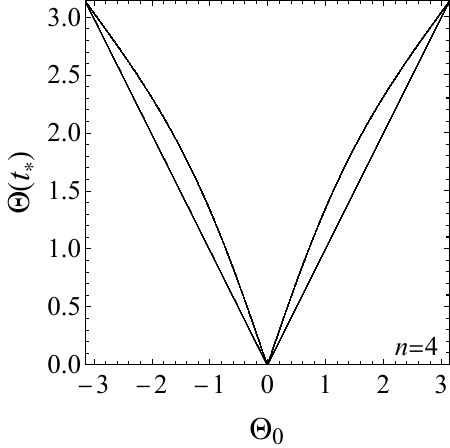}	 \hspace{-.1cm}
		\includegraphics[width=.24 	\linewidth]{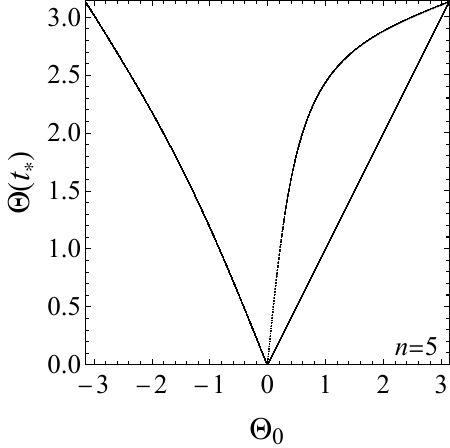}	\\
				\includegraphics[width=.24 	\linewidth]{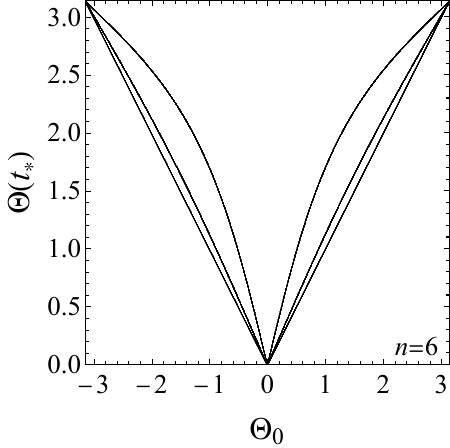}	 \hspace{-.1cm}
		\includegraphics[width=.24 	\linewidth]{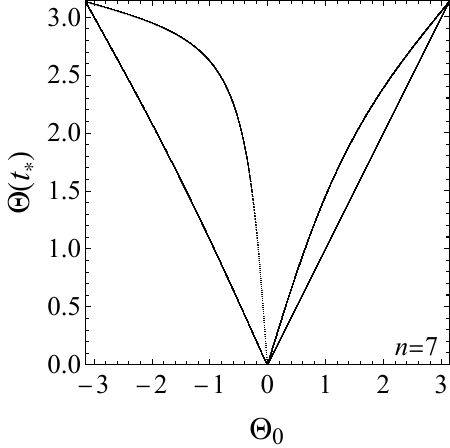}	 \hspace{-.1cm}
		\includegraphics[width=.24 	\linewidth]{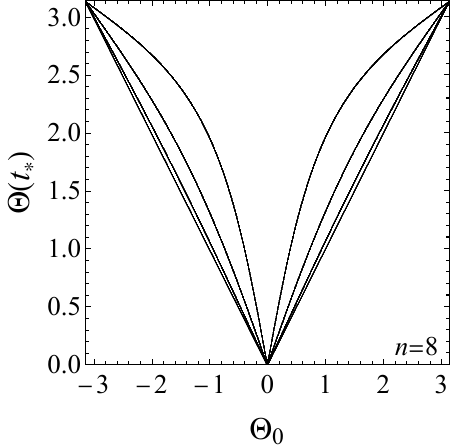}	 \hspace{-.1cm}
		\includegraphics[width=.24 	\linewidth]{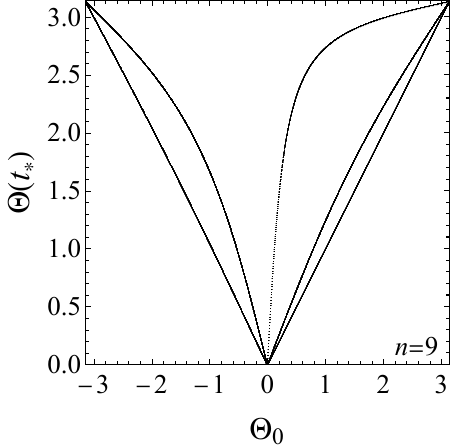}	
		\caption{Phase-parametric diagrams of reduced system~\eqref{eq:vh_red} constructed for $m=4,\,  k=4/(n^2-1)$   at zero value of the cyclic first integral $f =0$,  at variations of the initial angle $\Theta_0$. Here, $\Theta(t_\star)$ denotes the maximal values of $\Theta$ when  $\Theta'(t_\star)= 0$ with  $\Theta''(t_\star)< 0$,   for some~$t_\star$. The diagrams show  only  periodic orbits of the system without any regimes responsible for quasi-periodic motion which indicates its superintegrability.}
		\label{figbifred_super_int}
	\end{figure*}

	Moreover,  for
	\begin{equation}
		\label{eq:parki_super}
k=\frac{4}{n^{2}-1},\quad \text{with}\quad n\in \N\setminus\{0,1\},
	\end{equation}
	system~\eqref{eq:calkowalny} is maximally superintegrable.	The degree of the third first integral increases with the value of $n\in \N\setminus\{0,1\}$, and its general form is as follows
	\begin{equation}
		\label{eq:calk_{general}}
		Q=W_{n} P_{\Theta}+\sum_{i}^{[n/2]}W_{n-2i}\left[A_{i,1}P_{L}+A_{i,2}P_{\Theta}\right]+\delta(n)A_{0},
	\end{equation}
	where 
	\begin{equation}
		W_{n}(L,\Theta)=\frac{n}{L^{n-1}}\left(L\cos\frac{\Theta}{2}P_{L}-2 \sin\frac{\Theta}{2}P_{\Theta}\right)^{n-1}P_{\Theta}.
	\end{equation}
	Hence, $W_{1}=1$ and $W_{0}=0$. 
	Moreover, here $A_{i,1}, A_{i,2}$ and $A_{0}$ are unknown smooth functions of variables $(L,\Theta)$, whereas $[n/2]$ denotes the greatest integer that is less or
	equal to  $n/2$  and $\delta:\R\to\{0,1\}$ is a unit step function
	\begin{equation}
		\delta(n):=\begin{cases}
			1& \text{if}\quad  n\in \text{even},\\
			0& \text{if} \quad n\in \text{odd}.
		\end{cases}
	\end{equation}

	Calculating the Lie derivative of~\eqref{eq:calk_{general}}, we obtain a polynomial in the momenta $(P_{L},P_{\Theta})$. Hence, equating to zero its all coefficients, we get a system of partial differential equations for unknown functions $A_{i,1}, A_{i,2}$ and $A_{0}$.  Solutions of this system give a final form of the seeking first integral.  
 Tab.~\ref{tab:calki}  presents the complementary third first integrals of the superintegrable cases for the first values of $n$. As we can notice, the parity of the complementary first integral depends on $n$. For $n\in \N_\text{even}$ the additional first integral is an even polynomial with respect to the momenta, while for $n\in \N_\text{odd}$ it is an odd function. 
 
Maximally superintegrable systems exhibit the Bertrand property, meaning that all bounded trajectories are closed, and the motion is periodic.  This characteristic is clearly depicted in the  Poincar\'e section and phase-parametric diagrams presented in Figs.~\ref{fig:poinred_super_int}-\ref{figbifred_super_int}. Contrary to the integrable case (see Fig.~\ref{fig:poinred__int}(b)),  superintegrability manifests itself by a finite number of marked points on the Poincar\'e plane. There are no chaotic or even quasi-periodic loops. Each point on this plane corresponds to a distinct initial condition.  Additionally, from the phase-parametric diagrams, we can also deduce that for increasing values of $n$, the frequency ratio of oscillations increases as well.
 
	\begin{table*}[htp]
		\small
		\begin{tabular}{ |l|l|l|l|l|l| } 
			\hline
			$n$ &$k$ & Third first integral& Coefficients\\ \hline
			$2$ & $4/3$& $Q=W_{2}+A_{0}$ & $A_{0}=\frac{8}{3}L^{3}\cos\frac{\Theta	}{2}\sin\Theta$  \\[0.1cm] 
			$3$ & $1/2$ & $Q=W_3+A_{1,1}\, P_{L}+A_{1,2}\,P_{\Theta}$ &$A_{1,1}=2 L^{3}\left(1+\cos\Theta\right)\sin\Theta,\, A_{1,2}=-L^{2}\left(3+\cos\theta-2\cos 2\Theta\right)$ \\[0.1cm]
			$4$ & $4/15$ & $Q=W_{4}+W_{2}\left[A_{1,1}P_{L}+A_{1,2} P_{\Theta}\right]+A_{0}$ & $A_{1,1}=\frac{4}{3}L^{3}\left(1+\cos\Theta\right)\sin\Theta, \, A_{{1,2}}=-\frac{4}{15}L^{2}\left(9+4\cos\Theta-5\cos 2\Theta\right)$\\[0.1cm]
			$5 $ & $1/6$ & $Q=W_{5}+W_{3}\left[A_{1,1}\, P_{R}+A_{1,2}\, P_{\Phi}\right]+A_{2,1}\, P_{R}+A_{2,2}\, P_{\Phi}$& $A_{1,1}=\frac{5}{9}R^{3}\left(2\sin\Theta+\sin 2\Theta\right),\,A_{1,2}=-\frac{10}{9}R^{2}\left(2+\cos\Theta-\cos 2\Theta\right)$\\[0.1cm]
			&& &$A_{2,1}=-\frac{1}{6}R^{5}\sin^{-4}\frac{\Theta}{2}\sin^{5}\Theta, \, A_{2,2}=\frac{4}{9}R^{4}\cos^{4}\frac{\Theta}{2}\left(13-12\cos\Theta\right)$\\[0.1cm]
			$6$ & $4/35$ & $Q=W_{6}+W_{4}\left[A_{1,1}\, P_{R}+A_{1,2}\, P_{\Phi}\right]+W_{2}\left[A_{2,1}\, P_{R}+A_{2,2}\, P_{\Phi}\right]+A_{0}$&$A_{1,1}=R^{3}(1+\cos\Theta)\sin\Theta,\,  A_{1,2}=-\frac{1}{7}R^{2}(15+8\cos\Theta-7\cos 2\Theta)$,\\[0.1cm]
			&&& $A_{2,1}=-\frac{24}{175}R^{5}\sin^{-4}\frac{\Theta}{2}\sin^{5}\Theta, \ A_{2,2}=\frac{768}{1225}R^{4}\cos^{4}\frac{\Theta}{2}(8-7\cos\Theta)$\\[0.1cm]
			&&& $A_{0}=\frac{192}{42875}R^{7}\sin^{-5}\sin^{6}\Theta$\\[0.1cm]
			\hline
		\end{tabular}
		\caption{Table of third first integrals depending on values of $k$.\label{tab:calki}}
	\end{table*}

	\section{Summary and conclusions \label{sec:4}}
One of the
	fundamental problems of the theory of nonlinear dynamical systems and the chaos theory
	is the distinction of integrable models from non-integrable ones. 
	It is a very difficult task because most physical and mechanical systems depend on various parameters, which significantly complicates their integrability analysis. Moreover,  it is a matter of fact that the majority of real-world physical and mechanical systems are not integrable displaying highly chaotic dynamics. Nevertheless, the detection of a  new integrable case of an important physical system is still considered as a significant achievement in mathematics and mechanics.  An illustrative example is Kovalevskaya's highly non-trivial integrable case in rigid body dynamics~\cite{Kovalevskaya:89::}, which was awarded by the Bordin Prize of the French Academy of Sciences.

Recently, various types of pendulums have become the focus of extensive study in nonlinear physics. As mentioned in the introduction, these systems find practical applications in physics and mechanics.
In this paper, we continued our previous work~\cite{Szuminski:20::} by considering the dynamics and integrability of a natural generalization of the coupled pendulum system.  The studied model can be treated as the coupled pendulum system with the variable length, as well as the double-swinging Atwood machine with additional Hooke's iterations. Because the model has three degrees of freedom and depends on parameters, its numerical analysis was quite challenging. For this purpose, we computed Lyapunov exponents diagrams, which gave the quantitative description of chaos. Complementing the Lyapunov diagrams with phase-parametric diagrams allowed us to identify periodic orbits and their count in regions where all Lyapunov exponents approached zero.  Thanks to the existence of the invariant manifold, we were able to construct the Poincar\'e sections, which gave the qualitative description of chaos by showing the beautiful coexistence of periodic, quasi-periodic, and chaotic motion. Moreover, to make the analysis exhaustive, we combined the Poincar\'e sections with the Lyapunov diagrams. As shown this procedure can be effectively used for searching  ,,weak'' chaotic orbits in the Poincar\'e section plane and to measure the strength of chaos. Surprisingly enough, for relatively large energy values, the Poincar\'e sections did not exhibit highly chaotic, almost fully ergodic stages of system dynamics. Instead, we got rich necklace formations of high-order resonance periodic orbits. This observation differs significantly from typical Hamiltonian systems.

	The numerical analysis presented within the paper shows the complex behavior of the system suggesting its nonintegrability. We proved this fact by employing the Morales-Ramis theory and the analysis of the differential Galois group of variational equations. The novelty of our work concerns the fact that we performed the integrability analysis of the Hamiltonian system of three degrees of freedom for which the variational equations transform into the one fourth-order differential equation. To analyze the differential Galois group of this equation, and to prove the nonintegrability of the proposed model, we applied the   Kovacic algorithm of dimension four. We have shown that the variable-length coupled pendulum system is not integrable in the sense of Liouville for almost all values of the parameters. For values~\eqref{eq:warunki} we did not obtain integrability obstructions due to the solvability of variational equations. Initially, we thought that in these cases the system may be integrable -- at least for certain values of the reaming parameters. The numerical analysis, however, suggests the nonintegrability of the system. Therefore, to prove this fact the higher-order variational methods have to be used.   
	
	Finally, in the absence of the gravitational potential, the system has the symmetry $\field{S}^1$,
	and the Hamiltonian depends on the difference of angles only. Therefore, by introducing new variables, we were able to reduce the system to one with two degrees of freedom,  including an additional parameter derived from the momentum first integral. Nevertheless, due to the existence of the constraints and Hooke's interactions between the masses, the system still exhibits complex and chaotic dynamics, which were visualized with the help of the Poincar\'e sections and Lyapunov diagrams. We proved this fact by the analysis of variational equations.
	However, for the zero value of the cyclic first integral, there are no integrability obstacles due to the trivial solvability of variational equations. In this case, we used the Lyapunov exponents diagrams for searching values of the remaining parameters for which the system is suspected to be integrable. 
	 To our knowledge, this was the first attempt to use the  Lyapunov exponents as the indicator of integrable dynamics.  
	 Thanks to that, we find that for $m=4$  the system is integrable, and for certain $k$, the system 
	is even maximally superintegrable with two additional first integrals.  This is an exceptional feature for Hamiltonian systems with more than one degree of freedom. It can be shown that after the appropriate change of variables, this superintegrable system corresponds to the classical two-dimensional anharmonic oscillator. 
	
In summary, in the presented manuscript, we performed a comprehensive analysis of the dynamics and the integrability of the new model of variable-length coupled pendulums.   These results were
obtained with powerful tools, whose applications seem to be of great importance and usefulness to the studies of different pendulums-like systems. Moreover, the considered model reveals different types of dynamics starting from hyperchaos and ending at superintegrability. This makes the variable-length coupled pendulum system as an excellent example of teaching students of Lagrangian and Hamiltonian mechanics and its physical realization can be easily done in the laboratory.  We plan to obtain experimental results concerning its dynamics and compare them with numerical and analytical results obtained in this paper. Therefore, our next work will complete the above theoretical results.
	 \section*{Acknowledgements}
	This research has been  founded by The
	National Science Center of Poland under Grant No.
	2020/39/D/ST1/01632. For the purpose of Open
	Access, the authors have applied a CC-BY public
	copyright license to any Author Accepted Manuscript
	(AAM) version arising from this submission.

	\section{Appendix. Kimura theorem \label{sec:5}}
	The Gauss hypergeometric differential equation, is a homogeneous second-order differential equation with three regular singular points  $z\in\{0,1,\infty\}$, and it is given by	\begin{equation}\label{eq:gausss}
		\dfrac{\mathrm{d}^2\eta}{\mathrm{d}z^2}+\left(\frac{(\alpha+\beta+1)z-\gamma}{z(z-1)}\right)\dfrac{\mathrm{d}\eta}{\mathrm{d}z}+\frac{\alpha\beta}{z(z-1)}\eta=0,
	\end{equation}
	for details see~\cite{Whittaker:35::,Kristensson:12::}. The differences between the exponents 
	\[\rho=1-\gamma,\qquad \sigma=\gamma-\alpha-\beta,\qquad \tau=\beta-\alpha.\] satisfy
	the Fuchs relation
	\[
	\alpha+\alpha'+\gamma+\gamma'+\beta+\beta'=1.
	\]Necessary and sufficient conditions for solvability of the identity
	component of the differential Galois group of the Gauss differential equations~\eqref{eq:gausss} are well-known thanks to the Kimura work~\cite{Kimura:69::}. Let us recall the main theorem. 
	\begin{theorem} [Kimura]
		\label{th:Kimura}
		The identity component of the differential Galois group of the Gauss differential equation~\eqref{eq:gausss} is solvable iff
		\begin{description} 
			\item[A] at least one of the four numbers $\rho+\sigma+\tau$,
			$-\rho+\sigma+\tau$, $\rho+\sigma-\tau$, $\rho-\sigma+\tau$ is an odd
			integer, or
			\item[B] the numbers $\rho$ or $-\rho$  and
			$\sigma$ or $-\sigma$ and $\tau$ or $-\tau$ belong (in an arbitrary order) to some of
			appropriate fifteen families forming the so-called Schwarz's Table \ref{tab:sch_app}.
			\begin{table}[h]
				\begin{tabular}{lllll}
					\toprule
					\text{1}&$1/2+r$&$1/2+s$&$\field{C}$&\\
					\text{2}&$1/2+r$&$1/3+s$&$1/3+p$&\\
					\text{3}&$2/3+r$&$1/3+s$&$1/3+p$&$r+s+p$ even\\
					\text{4}&$1/2+r$&$1/3+s$&$1/4+p$&\\
					\text{5}&$2/3+r$&$1/4+s$&$1/4+p$&$r+s+p$ even\\
					\text{6}&$1/2+r$&$1/3+s$&$1/5+p$&\\
					\text{7}&$2/5+r$&$1/3+s$&$1/3+p$&$r+s+p$ even\\
					\text{8}&$2/3+r$&$1/5+s$&$1/5+p$&$r+s+p$ even\\
					\text{9}&$1/2+r$&$2/5+s$&$1/5+p$&\\
					\text{10}&$3/5+r$&$1/3+s$&$1/5+p$&$r+s+p$ even\\
					\text{11}&$2/5+r$&$2/5+s$&$2/5+p$&$r+s+p$ even\\
					\text{12}&$2/3+r$&$1/3+s$&$1/5+p$&$r+s+p$ even\\
					\text{13}&$4/5+r$&$1/5+s$&$1/5+q$&$r+s+p$ even\\
					\text{14}&$1/2+r$&$2/5+s$&$1/3+p$& \\
					\text{15}&$3/5+r$&$2/5+s$&$1/3+p$&$r+s+p$ even\\
					\hline	\hline
				\end{tabular}
				\caption{Schwarz's table. Here $r,s,p\in\Z$ \label{tab:sch_app}}
			\end{table}
		\end{description}
	\end{theorem}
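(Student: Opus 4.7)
The plan is to reduce solvability of the identity component $G^\circ$ of the differential Galois group of~\eqref{eq:gausss} to an analysis of the projective monodromy group $\overline{M}\subset\mathrm{PSL}(2,\C)$. Since~\eqref{eq:gausss} is Fuchsian on $\C\bbP^1$ with three regular singular points, the Schlesinger density theorem identifies the Galois group with the Zariski closure of the monodromy representation; by Lie--Kolchin, $G^\circ$ is solvable iff the Zariski closure of $\overline{M}$ has a solvable identity component. In $\mathrm{PSL}(2,\C)$, the standard structure theorem for algebraic subgroups classifies those with solvable identity component as conjugate into one of three families: (i) a Borel (reducible case), (ii) the normalizer of a Cartan (infinite dihedral/imprimitive case), or (iii) one of the primitive finite subgroups, i.e.\ tetrahedral $A_4$, octahedral $S_4$, or icosahedral $A_5$.

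Next, I would encode the local data. Up to conjugation, the local monodromies $A,B,C$ at $0,1,\infty$ are elements of $\mathrm{PSL}(2,\C)$ whose traces (up to sign) are $\pm 2\cos(\pi\rho)$, $\pm 2\cos(\pi\sigma)$, $\pm 2\cos(\pi\tau)$, and the Fuchs relation together with the analytic continuation formula forces $ABC=\mathrm{Id}$ in $\mathrm{PSL}(2,\C)$. Thus $\overline{M}$ is a three-generator group of triangle type with prescribed local conjugacy classes, and the task reduces to deciding which triples $(\rho,\sigma,\tau)$ allow $\overline{M}$ to embed in one of the three families above.

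Case~A, where one of $\pm\rho\pm\sigma\pm\tau$ is an odd integer, I would handle by detecting reducibility and dihedrality directly from the local exponents: such a half-integer relation forces either a common eigenvector for all three local monodromies---which I would produce explicitly from the local normal forms---yielding conjugation into the Borel, or else forces all three monodromies to interchange a fixed unordered pair on $\C\bbP^1$, yielding conjugation into the dihedral normalizer. The eight sign combinations collapse to the same conclusion under the Fuchs relation, so this case is conceptually straightforward.

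Case~B, the Schwarz table, is the main obstacle. Here I would invoke Schwarz's $1873$ approach: if the projective monodromy is a primitive finite subgroup of $\mathrm{PSL}(2,\C)$, then the Schwarz triangle with interior angles $\pi\rho,\pi\sigma,\pi\tau$ must tessellate the $2$-sphere, which forces $(\rho,\sigma,\tau)$ to be a rational triple compatible with the rotation group of a Platonic solid. The fifteen rows of Table~\ref{tab:sch_app} arise as a set of representatives after quotienting the raw list by the symmetries of~\eqref{eq:gausss}, namely Kummer's $24$ transformations together with integer shifts of the exponents. The technical core---and the place where I expect the bulk of the work to live---is the combinatorial enumeration showing that these fifteen families exhaust all primitive finite triangle triples and that every embedding of $\overline{M}$ into $A_4,S_4,A_5$ reduces to one of them; this is precisely the content of Kimura's paper and would be the step I would expect to grind through most carefully.
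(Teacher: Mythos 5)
The paper does not prove this statement at all: Theorem~\ref{th:Kimura} is quoted in the Appendix as a classical result and attributed to Kimura's 1969 paper, so there is no in-paper argument to compare yours against. Judged on its own terms, your outline correctly identifies the standard strategy --- Schlesinger density to replace the Galois group by the Zariski closure of the monodromy, the classification of algebraic subgroups of $\mathrm{PSL}(2,\C)$ with solvable identity component into Borel, dihedral-normalizer, and primitive finite types, and Schwarz's spherical-triangle enumeration for the finite primitive cases --- and this is indeed the skeleton of Kimura's argument.

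Two caveats. First, your handling of condition~A conflates the reducible and dihedral cases: in Kimura's formulation, condition~A (an odd integer among $\pm\rho+\sigma+\tau$, $\rho\pm\sigma+\tau$, $\rho+\sigma\pm\tau$) is precisely the reducibility criterion (common eigenvector, conjugation into a Borel), while the dihedral/imprimitive case is captured separately by row~1 of the Schwarz table ($1/2+r$, $1/2+s$, arbitrary) inside condition~B; asserting that an odd-integer relation can ``force all three monodromies to interchange a fixed unordered pair'' is not how the dichotomy is organized, and the two alternatives require different local-exponent conditions. Second, and more substantially, the proposal explicitly defers the decisive step --- verifying that condition~A is equivalent to reducibility in both directions, and that the fifteen families exhaust the imprimitive and primitive finite possibilities after quotienting by Kummer's transformations and integer shifts --- to ``Kimura's paper.'' Since that enumeration \emph{is} the theorem, the proposal is a roadmap rather than a proof; it would be acceptable as a citation-backed recollection (which is all the paper itself offers), but not as a self-contained demonstration.
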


\end{document}